\DeclareMathOperator{\Tr}{Tr}
\DeclareMathOperator{\re}{Re}
\DeclareMathOperator{\im}{Im}
\DeclareMathOperator{\erf}{erf}
\DeclareMathOperator{\erfc}{erfc}
\newcommand{\E}{ {\cal E} }
\newcommand{\Mmtm}{\Gamma}
\newcommand{\mmtm}{\gamma}
\newcommand{\abs}[1]{\left| {#1} \right|} 
\newcommand{\ketbra}[2]{\ket{#1}\bra{#2}}
\newcommand{\tr}[2]{\Tr_{#1}\left( #2 \right)}
\DeclareSymbolFont{bbold}{U}{bbold}{m}{n}
\DeclareSymbolFontAlphabet{\mathbbold}{bbold}
\newcommand{\iden}{\mathbbold{1}}
\theoremstyle{plain}
\newtheorem{thm}{Theorem}
\newtheorem{lem}[thm]{Lemma}
\theoremstyle{definition}
\theoremstyle{remark}
\begin{document}
\date{September 4th, 2019}
\title{Anomalous weak values and contextuality: robustness, tightness, and imaginary parts}
\author{Ravi Kunjwal}
\affiliation{Perimeter Institute for Theoretical Physics, 31 Caroline Street North, Waterloo, ON N2L 2Y5, Canada}
\author{Matteo Lostaglio}
\affiliation{ICFO-Institut de Ciencies Fotoniques, The Barcelona Institute of Science and Technology, Castelldefels (Barcelona), 08860, Spain}
\author{Matthew F. Pusey}
\affiliation{Department of Computer Science, University of Oxford,  Wolfson Building,  Parks Road, Oxford OX1 3QD, UK}

\begin{abstract}
Weak values are quantities accessed through quantum experiments involving weak measurements and post-selection. It has been shown that `anomalous' weak values (those lying beyond the eigenvalue range of the corresponding operator) defy classical explanation in the sense of requiring contextuality [M. F. Pusey, \href{https://doi.org/10.1103/PhysRevLett.113.200401}{Phys. Rev. Lett. \textbf{113}, 200401}, \href{https://arxiv.org/abs/1409.1535}{arXiv:1409.1535}]. Here we elaborate on and extend that result in several directions. Firstly, the original theorem requires certain perfect correlations that can never be realised in any actual experiment. Hence, we provide new theorems  that allow for a noise-robust experimental verification of contextuality from anomalous weak values, and compare with a recent experiment.  Secondly, the original theorem connects the anomaly to contextuality \emph{only} in the presence of a whole set of extra operational constraints. Here we clarify the debate surrounding anomalous weak values by showing that these conditions are tight -- if any one of them is dropped, the anomaly can be reproduced classically. Thirdly, whereas the original result required the \emph{real part} of the weak value to be anomalous, we also give a version for any weak value with nonzero imaginary part. Finally, we show that similar results hold if the weak measurement is performed through qubit pointers, rather than the traditional continuous system.
In summary, we provide inequalities for witnessing nonclassicality using experimentally realistic measurements of \emph{any} anomalous weak value, and clarify what ingredients of the quantum experiment must be missing in any classical model that can reproduce the anomaly.  
\end{abstract}

\maketitle

\section{Introduction}

\emph{Weak measurements} \cite{aharonov88} are a class of minimally disturbing quantum measurements whose practical as well as foundational relevance is currently being investigated~\cite{dressel14}. A weak measurement of an observable $O$ can be realized by weakly coupling a quantum system to a one-dimensional pointer device via a von Neumann-type interaction $\propto O \otimes \Mmtm$, with $\Mmtm$ the momentum of the pointer, so that a small amount of information is imprinted in the pointer at the cost of a small disturbance on the system.

Pivotal to any attempt to establish the presence of nonclassical effects in a given experiment is the formulation of a rigorous no-go theorem based on a precise and operational notion of nonclassicality. It has long been argued that the average final position of the pointer -- conditioned upon a successful postselection performed on the system after the weak measurement -- is a witness to nonclassicality \cite{aharonov88}; in the quantum formalism this quantity is related to the (real part of the) \emph{weak value}, which is ${}_{\phi}\braket{O}_{\psi}:= \braket{\phi|O|\psi}/\braket{\phi|\psi}$, where $O$ is the observable being weakly measured, $\ket{\psi}$ is the initial preparation and $\ket{\phi}$ is the post-selection. A long-standing debate ensued between those supporting the thesis that these experiments are indeed probing truly quantum effects and those arguing that they can be easily understood from classical statistics \cite{leggett89,aharonov89,ferrie14,brodutch15,ferrie15,vaidman17}.  

Recently, a precise no-go theorem was established~\cite{pusey14}. The theorem proves that \emph{anomalous weak values} (AWV), i.e. $ \prescript{}{\phi}{\langle O \rangle}_{\psi} $ taking values beyond the spectrum of $O$, are associated to operational statistics defying any noncontextual explanation in the generalized sense introduced by Spekkens \cite{spekkens05}. Nevertheless, the theorem of Ref.~\cite{pusey14} leaves several questions open:

\begin{enumerate}
	\item First of all, it assumes an \emph{exactly} projective postselection $\ket{\phi}$, which makes any experimental test~\citep{piacentini16} necessarily inconclusive; in fact, any degree of noise makes the no-go theorem inapplicable. Does the nonclassicality of AWV survive real-world conditions? 
	\item Second, both Ref.~\cite{pusey14} and the noise-robust theorems presented here prove that AWV are non-classical in the presence of a set of extra operational conditions. Are these all truly necessary?
	\item Third, the theorem only refers to the \emph{real part} of the weak value. Is a nonzero value of the imaginary part of the weak value also non-classical?
	\item Fourth, the relation between AWV and contextuality holds for a measurement with a continuum of outcomes. Can it be extended to discrete systems, such as an experiment involving only a single qubit pointer, or a coarse graining of the standard weak value experiment? This is also experimentally relevant because the infinitely many operational constraints required for the original theorem \cite{pusey14} to hold cannot be tested by finite means, and a discrete pointer is often more practical anyway.
	\item Finally, the theorem identifies a single noncontextuality inequality which is violated in the presence of AWV. However, is the inequality unique and is it tight?
\end{enumerate}

Our investigation largely answers all these questions:
\begin{enumerate}
  \item We provide two new proofs of contextuality from AWV that are robust to noise, based on Theorems \ref{thm:1} and \ref{th:preparationNC}. The two new proofs are complementary, each requiring the satisfaction of a different set of operational constraints together with the observation of the AWV. These results show that, at the price of extending the set of operational tests required, the relation between AWV and nonclassicality extends beyond the ideal, noiseless case. We also discuss the significance of these results for current experimental tests (Sec.~\ref{currexpts}).
	\item We show that the extra operational conditions in our theorems form a minimal set: dropping any one of them allows to reproduce the AWV within a classical model (Sec.~\ref{sec:necessity}). This illuminates the debate around ``quantumness'' of AWV (e.g. \cite{ferrie14, brodutch15, ferrie15}), since it rigorously shows that it is only in the presence of \emph{all} the operational facts listed in our theorems that AWV defy a classical explanation.
	\item The imaginary part of the weak value admits its own contextuality theorem (Sec.~\ref{imawv}, Theorem \ref{thm:2}). Hence, \emph{any} AWV can be related to contextuality. We clarify why this is not in contradiction with recent studies~\cite{bartlett12, karanjai15} suggesting that imaginary weak values admit a classical model.
	\item The contextuality of AWV has nothing to do with continuous measurements and extends to discrete pointers as well (Sec.~\ref{sec:discrete}, Theorem \ref{thm:3}). This makes the experiment suited for conclusive experimental verification, since in this case only a finite set of operational tests are required.
	\item The noncontextual bound in Ref.~\cite{pusey14} is not tight, but we provide an improved version and investigate its tightness and uniqueness using computational methods from Ref. \cite{schmid17} (Appendix \ref{proof1}).
\end{enumerate}

Our theorems are noise-robust in the sense of not requiring perfectly projective measurements, but noise can also impact the other operational conditions of our proofs. We view those issues as being outside the scope of this work, because with the form of noise-robustness we provide in place, there are generic approaches to tackling the main remaining idealizations, as discussed in Sec.~\ref{remain}.

\section{Noise-robust no-go theorems for anomalous weak values}

\subsection{Weak values}

Let $\rho$ be a quantum state, $O$ an observable and $[y|M_F]$ a post-selection measurement, i.e., $[y=1|M_F] = \Pi_\phi$ (successful post-selection), \mbox{$[y=0|M_F] = \iden - \Pi_\phi$} (failed post-selection), with $\Pi_\phi = \ketbra{\phi}{\phi}$. We can then define the (generalized) \emph{weak value}
\begin{equation}
\label{eq:generalizedweakvalue}
{}_{\phi}{\langle  O \rangle}_{\rho} = \frac{\tr{}{\Pi_\phi O \rho}}{\tr{}{\Pi_\phi\rho}}.
\end{equation}
This expression equals to the standard expression of the weak value of Ref.~\cite{aharonov88} when $\rho = \ketbra{\psi}{\psi}$. For ${}_{\phi}{\langle  O \rangle}_{\rho}$ to be well-defined, we take $\tr{}{\Pi_\phi\rho}>0$, i.e., the preselection and postselection are nonorthogonal. The weak value can be experimentally accessed by a weak measurement of $O$. Specifically, couple $O$ with a one dimensional pointer device through the Hamiltonian $H = O \otimes \Mmtm$, with $\Mmtm$ the momentum of the pointer. Suppose the pointer  is initialized in  a Gaussian pure state centered around the origin and with spread $s$:
	\begin{small}
	\begin{equation}
	\label{eq:gaussianpointer}
	\ket{\psi}_P = \int dx G_{s}(x) \ket{x}, \; \; G_{s}(x)= (\pi s^2)^{-1/4} \exp \left[-x^2/(2s^2)\right].
	\end{equation}
		\end{small}
In the limit $s \rightarrow \infty$, if a projective measurement of the pointer's position is carried out after a unit time, we obtain a so-called \emph{weak measurement} of $O$ ($s \rightarrow 0$ would give a projective measurement of $O$). 

Suppose now the post-selection measurement \mbox{$\{\Pi_\phi, \iden - \Pi_\phi\}$} is carried out on the system, after the interaction with the pointer. The average position of the pointer, conditioned on observing $\Pi_\phi$ (successful post-selection), is proportional to $\re\left({}_{\phi}{\langle O \rangle}_{\rho}\right)$, whereas  $\im\left({}_{\phi}{\langle O \rangle}_{\rho}\right)$ can be recovered from the expected momentum of the pointer given a successful postselection~\cite{jozsa07}.

The weak value is called \emph{anomalous} when it cannot be written as a convex combination of the eigenvalues of $ O$. There are two ways this can happen:
\begin{enumerate}
	\item[(i)]  $\re\left( {}_\phi{\langle O \rangle}_\rho \right)$ is smaller than the smallest eigenvalue of $O$, or larger than the largest eigenvalue,
	\item[(ii)] $\im\left( {}_\phi{\langle O \rangle}_\rho \right) \neq 0$.
\end{enumerate}
Only (i) was related to contextuality in Ref.~\cite{pusey14}, but our results here show that both in fact lead to proofs of contextuality. 

Writing the spectral decomposition of $O$ as $O = \sum_i o_i \E_i$, we have that
\begin{equation}
{}_\phi{\langle O \rangle}_\rho = \sum_i o_i \; {}_\phi{\langle \E_i \rangle}_\rho \label{eq:weakspectral}
\end{equation} and $\sum_i {}_\phi{\langle \E_i \rangle}_\rho = {}_\phi{\langle \iden \rangle}_\rho = 1$. Then, if ${}_\phi{\langle O \rangle}_\rho$ is anomalous, at least one of the ${}_\phi{\langle \E_i \rangle}_\rho$ must be anomalous (i.e. not a standard probability).\footnote{Note that one can have instances in which some or all ${}_\phi{\langle \E_i \rangle}_\rho$ are anomalous, but ${}_\phi{\langle O \rangle}_\rho$ is not, e.g. if an observable has a zero eigenvalue then the weak value of the associated projector is irrelevant to the weak value of the observable.} This is because if all the ${}_\phi{\langle \E_i \rangle}_\rho$ are standard probabilities then \eqref{eq:weakspectral} shows that ${}_\phi{\langle O \rangle}_\rho$ is in the convex hull of the $o_i$.

Since, then, whenever we have an anomalous weak value for an observable $O$ we can also find an anomalous weak value for one of its eigenprojectors, without loss of generality we will focus on weak values of projectors.   

Furthermore, if a projector $\E$ is anomalous due to its real part, then either ${\rm Re} {}_\phi{\langle \E \rangle}_\rho < 0 $ or ${\rm Re} {}_\phi{\langle (\iden - \E) \rangle}_\rho < 0 $; similarly, if a projector $\E$ is anomalous due to its imaginary part, then either ${\rm Im} {}_\phi{\langle \E \rangle}_\rho < 0 $ or 
${\rm Im} {}_\phi{\langle (\iden - \E) \rangle}_\rho < 0 $. 
Hence, without loss of generality we will focus on anomalous weak values for projectors with negative real or imaginary part.

For calculations it will often be useful to refer to the numerator of Eq.~\eqref{eq:generalizedweakvalue}, which we write as $\braket{\Pi_\phi \E}_\rho := \Tr(\Pi_\phi\E\rho)$.\footnote{$\braket{\Pi_\phi\E}_\rho$ coincides with the so-called Kirkwood-Dirac \cite{kirkwood33,dirac45} quasiprobability distribution, the real part of which is the Margenau-Hills \cite{margenau1961correlation} distribution, see Section IV.A of Ref.~\cite{dressel15} for details. These distributions are related to the `optimal' estimate of the observable $\E$ from a measurement of $\Pi_\phi$, under the prior information that the initial state is $\rho$ \cite{hall2004prior}.} Since the denominator $\Tr(\Pi_\phi\rho)$ is a positive real number (in particular, recalling that it must be non-zero for a well-defined weak value), $\braket{\Pi_\phi\E}_\rho$ has negative real or imaginary parts if and only if ${}_\phi{\langle \E \rangle}_\rho$ does.

\subsection{Setting the stage: the standard quantum experiment}\label{stage}

\begin{figure}
	\includegraphics[width=\columnwidth]{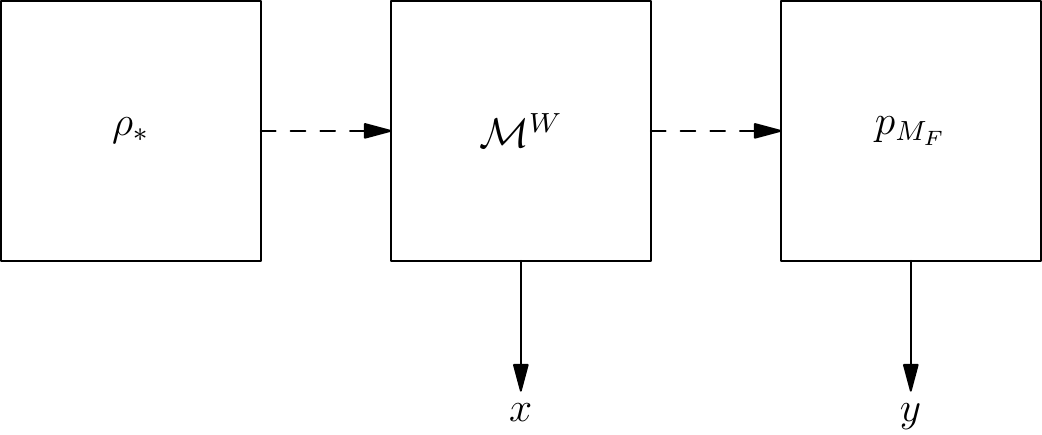}
	\caption{Illustration of the three stages of a quantum weak value experiment.}
	\label{fig:quantum_exp}
\end{figure}

Let us discuss the traditional experimental setting for weak measurements and weak values \cite{aharonov88} (see Appendix~\ref{appendix:standard} for some details of the calculations. Later we will discuss extensions to qubit pointers). As discussed above, we can focus on the weak value of some projector $\E$. There are three stages of the quantum experiment (see Fig.~\ref{fig:quantum_exp}):

\emph{Preparation.} A system is prepared in some quantum state. Since no difficulties arise from allowing a generic mixed state $\rho_*$, we allow mixed preparations. 

\emph{Weak measurement}. A measurement is performed through the following scheme: a pointer device, represented by a one-dimensional continuous system with conjugate variables $X$ and $\Mmtm$, is initialized in the Gaussian pure state $\ket{\psi}_P$ given above. The system is coupled to the pointer through the Hamiltonian $H= \E \otimes \Mmtm$.  

A standard calculation (see, e.g., the proof of Theorem~1 in Ref.~\cite{pusey14}) shows that, after a unit time, a measurement of $X$ on the pointer realises a POVM $[x|M_W] = N^\dag_x N_x$ on the system given by
	\begin{equation}
	\label{eq:N_x}
	N_x = \braket{x|e^{-i H} |\Psi}_P = G_s(x-1) \mathcal{E} +G_s(x)\mathcal{E}^\perp,
	\end{equation}
\begin{equation}
\label{eq:M_Wopeq}
[x|M_W] = G^2_s(x-1)[y=1|M_\E] + G_s^2(x)[y=0|M_\E],
\end{equation}
where $[y=1|M_\E] = \E$, $[y=0|M_\E] = \E^\perp = 1 - \E$.

Let $\mathcal{M}^W_x(\cdot) = N_x (\cdot) N^\dag_x$ be the state update map for outcome $x$. The channel induced by the weak measurement when the outcome is not recorded is 
\begin{equation}
\label{eq:channelM}
\mathcal{M}(\cdot) = \int_{-\infty}^{+\infty} dx \mathcal{M}^W_x (\cdot) =\int_{-\infty}^{+\infty}dx  N_x(\cdot) N^\dag_x.
\end{equation}
One finds, $\mathcal{M}(\rho)
=(1-p_d)\rho+p_d (\E -\mathcal{E}^{\perp})\rho(\mathcal{E}-\mathcal{E}^{\perp})$, with a ``probability of disturbance'' $p_d = \frac{1-e^{-1/4s^2}}{2}$. Hence,
\begin{align}
\label{eq:M_opeq}
\mathcal{M}
=(1-p_d) \mathcal{I} + p_d \mathcal{M}^D,
\end{align}
with $\mathcal{M}^D(\rho):= (\E - \E^\perp)\rho (\E - \E^\perp)$.

\emph{Post-selection}. Finally, one can measure $[y|M_F]$ and compute the probability of a negative $x$ followed by a successful post-selection
\begin{small}
\begin{equation*}
p^{\rm ideal}_- = \int^0_{-\infty} dx \tr{}{ \Pi_\phi N_x \rho_* N^\dag_x} = \int^0_{-\infty} dx \tr{}{ \Pi_\phi \mathcal{M}^W_x(\rho_*)} ,
\end{equation*}
\end{small}
which will be a central witness of nonclassicality in the following theorems.  Denoting the undisturbed probability of post-selection by $p_F=\tr{}{\Pi_\phi\rho_*}$,  one finds 
\begin{equation}
\label{eq:p-ideal}
p^{\rm ideal}_- = \frac{p_F}{2} - \frac{\re\left(\braket{\Pi_\phi\E}_{\rho_*}\right)}{\sqrt{\pi}s} + o\left(\frac{1}{s}\right).
\end{equation}
This is a simple calculation see, e.g., the proof of Lemma~1 in Ref.~\cite{lostaglio18} (note, however, that we redefined $p^{\rm ideal}_-$ without the normalisation by the postselection probability). Recall from the previous section that a weak value with an anomalous real part implies an $\E$ with $\re\left(\braket{\Pi_\phi\E}_{\rho_*}\right) < 0$. We will show that this means $p^{\rm ideal}_-$ is larger than can be explained in a non-contextual model.

\subsection{Non-contextual description of the quantum experiment} 
\label{sec:noncontextualdescription}

We now analyze how a putative non-contextual ontological model (Fig.~\ref{fig:ontological}) would describe the quantum experiment (Fig.~\ref{fig:quantum_exp}). Let us follow the three stages:

\begin{figure}
\includegraphics[width=\columnwidth]{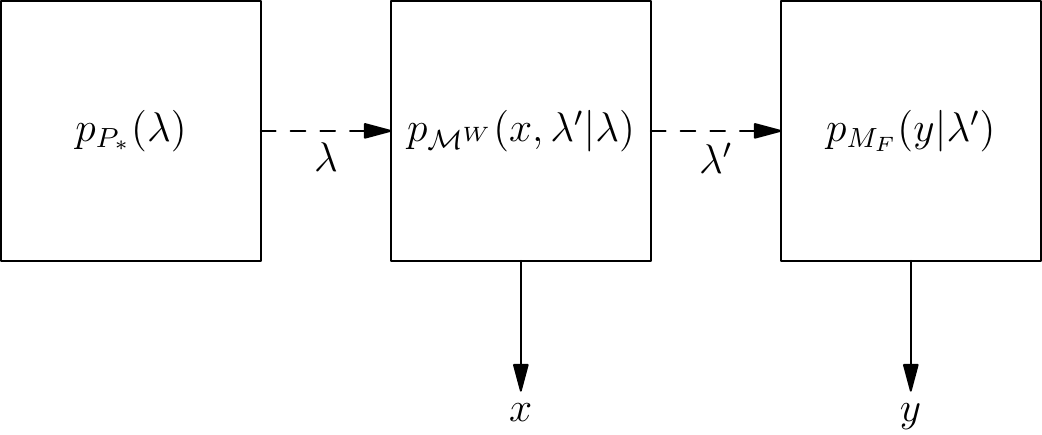}
	\caption{Illustration of an ontological model for the quantum experiment in Fig.~\ref{fig:quantum_exp}.}
	\label{fig:ontological}
\end{figure}

\emph{Preparation.} The preparation of the quantum state $\rho_*$ can be abstractly thought of as a set of instructions $P_*$ that initialize the system. In an ontological model, this is associated to sampling from a distribution $p_{P_*}(\lambda)$ over some set of hidden variables $\lambda$.

\emph{Weak measurement.} The weak measurement is a quantum instrument \{$\mathcal{M}^W_x$\} (also understood as a set of experimental procedures) that, in the ontological model, is represented by the function $p_{\mathcal{M}_W}(x,\lambda'|\lambda)$. This describes the probability that, given as input the state $\lambda$, the weak measurement gives outcome $x$ and updates the state to $\lambda'$ (the update $\lambda \rightarrow \lambda'$ models the potential disturbance induced by the measuring apparatus). If $p_{\mathcal{M}}(\lambda'|\lambda)$ represents the matrix of transition probabilities associated to the channel $\mathcal{M}$ in Eq.~\eqref{eq:channelM}, one has $p_{\mathcal{M}}(\lambda'|\lambda) = \int_{-\infty}^{+\infty} dx p_{\mathcal{M}_W}(x,\lambda'|\lambda)$. On the other hand, the response function $p_{M_W}(x|\lambda)$ of the weak measurement $[x|M_W]$, giving the probability that the weak measurement outputs $x$ given the input state $\lambda$, is given by $p_{M_W}(x|\lambda) = \int d\lambda' p_{\mathcal{M}_W}(x,\lambda'|\lambda)$. 

\emph{Post-selection.} The measurement $[y|M_F]$ is also represented in the ontological model by a response function $p_{\mathcal{M}_F}(y|\lambda)$. While in the quantum experiment $[y|M_F]$ would ideally be a projective measurement, in contrast to Ref.~\cite{pusey14}, our theorems will not rely on this being the case (in fact, our first theorem makes no assumption about $[y|M_F]$). This is necessary in any experimental verification of the relation between anomalous weak values and contextuality, since no experiment can achieve this idealization.

\emph{Operational statistics.} The operational statistics collected by the whole experiment is summarized by
\begin{enumerate}
\item $p(x,y|P_*,\mathcal{M}^W,M_F) $, the probability that if the preparation procedure $P_*$ is followed, sequentially performing the weak measurement procedure $\mathcal{M}^W$ and the post-selection procedure $M_F$, returns outcomes $x$ and $y$, respectively. In the quantum setting this is given by $\tr{}{ [y|M_F] \mathcal{M}^W_x(\rho_*)}$. 
\item $p(y|P_*,M_F) $, the probability that  if the preparation procedure $P_*$ is directly followed by the post-selection measurement procedure $M_F$, one gets outcome $y$. In the quantum setting this is given by $\tr{}{ [y|M_F] \rho_*}$. 
\end{enumerate}
An ontological model for this experiment is a set of assignments as described above and satisfying
\begin{multline*}
p(x,y|P_*,\mathcal{M}^W,M_F) =\\ \int d\lambda' d\lambda p_{P_*}(\lambda)  p_{\mathcal{M}_W}(x,\lambda'|\lambda) p_{M_F}(y|\lambda'),
\end{multline*}
\begin{equation*}
p(y|P_*,M_F) = \int d\lambda p_{P_*}(\lambda) p_{M_F}(y|\lambda).
\end{equation*}

\emph{Noncontextuality.} A generic ontological model description of the experiment can always be found, whatever the operational statistics. However, non-contextual models (according to the definition of Ref.~\cite{spekkens05}) are those that \emph{associate to operationally indistinguishable procedures identical representation in the ontological model}. In the present case, the weak measurement procedure $[x|M_W]$ is operationally equivalent, due to Eq.~\eqref{eq:M_Wopeq}, to measuring $[y|M_\E]$ and then sampling as prescribed according to the distribution $G^2_s(x)$. Hence non-contextual models require
\begin{equation}
\label{eq:noncontextualMw}
p_{M_W}(x|\lambda) = G^2_s(x-1) p_{M_\E}(y=1|\lambda) + G^2_s(x) p_{M_\E}(y=0|\lambda),
\end{equation}
where $p_{M_\E}(y|\lambda)$ is the response function of the measurement $[y|M_\E]$. Similarly, the operational equivalence of Eq.~\eqref{eq:M_opeq} implies that non-contextual models satisfy
\begin{equation}
\label{eq:noncontextualM}
p_{\mathcal{M}}(\lambda'|\lambda) = (1-p_d) p_{\mathcal{I}}(\lambda'|\lambda) + p_d p_{\mathcal{M}^D}(\lambda'|\lambda),
\end{equation}
where $p_{\mathcal{I}}(\lambda'|\lambda)$ and $ p_{\mathcal{M}^D}(\lambda'|\lambda)$ are matrices of transition probabilities representing the channels $\mathcal{I}$ and $\mathcal{M_D}$ in the ontological model.

\subsection{AWV and contextuality beyond idealisations}

In this section we will start our investigation by presenting two results. First, the assumption of non-contextuality limits the maximum value achievable by the quantity  
\begin{equation}
p_-:=  \int_{-\infty}^0 p(x,y=1| P_*, \mathcal{M}^W,M_F) dx,
\end{equation}
even beyond the idealized scenario studied in Ref.~\cite{pusey14}. Secondly, in the quantum treatment the relation between $p_-$ and the weak value presented in Eq.~\eqref{eq:p-ideal} extends to situations where noise and imperfections are present. Combining these two results we obtain our first proof that (real) anomalous weak values are nonclassical beyond the idealized setting of Ref.~\cite{pusey14}. What is more, we can quantify how strong the anomaly needs to be, for given noise, to prove contextuality.

To highlight the independence of our noncontextuality theorems from the quantum formalism, we introduce the notation $\simeq$ to denote operationally indistinguishable procedures, following Ref.~\cite{spekkens05}. For example, instead of the operator equality of Eq.~\eqref{eq:M_Wopeq} we will write
\begin{equation*}
[x|M_W] \simeq G^2_s(x-1)[y=1|M_\E] + G_s^2(x)[y=0|M_\E],
\end{equation*}
which means that the above two measurement procedures give rise to the same operational statistics for every preparation procedure taken as input. Similarly, Eq.~\eqref{eq:M_opeq} becomes 
\begin{equation*}
\mathcal{M}
\simeq (1-p_d) \mathcal{I} + p_d \mathcal{M}^D,
\end{equation*}
denoting that, for any preparation procedure used to initialize the system, if we apply either of the above two transformations and then measure according to an arbitrary measurement procedure, the outcome statistics will be identical.
When the relevant operational data arises from quantum experiments, however, $\simeq$ can be simply identified with the corresponding operator identities, as we did in the previous section.

\begin{thm}[Noise-robust contextuality from the real part of the weak value]
	\label{thm:1}
	Suppose we have a noncontextual ontological model and that:
	\begin{enumerate}
	  \item \label{condition:thm1} There exists a 2-outcome measurement $M_\E$ and a probability distribution $q(x)$ with median $x=0$ such that, for all $x \in \mathbb{R}$,
	     \begin{equation}
		    [x|M_W] \simeq q(x-1)[y=1|M_{\E}] + q(x)[y=0|M_{\E}].
		    \label{eq:thm1}
	      \end{equation}
		
		\item \label{condition:thm1_2} If $\mathcal{M}:= \int \mathcal{M}^W_x dx$, there exists $p_d \in [0,1]$ such that
		\begin{equation}
			\mathcal{M} \simeq (1-p_d) \mathcal{I} + p_d \mathcal{M}^D, 
			\label{eq:thm1_2}
		\end{equation}
		where $\mathcal{I}$ denotes the identity transformation and $\mathcal{M}^D$ some other transformation.
	\end{enumerate}
	Then, if $p_-:=  \int_{-\infty}^0 p(x,y=1| P_*, \mathcal{M}^W,M_F) dx$ and $p_F := p(y=1|P_*,M_F)$,
	\begin{equation}
		p_- \leq p^{NC}_- := p_F\frac12  + (1-p_F)p_d .
	\end{equation}
\end{thm}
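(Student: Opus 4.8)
The plan is to express $p_-$ entirely in terms of quantities that the noncontextuality constraints of conditions~\ref{condition:thm1} and~\ref{condition:thm1_2} allow me to bound. First I would write the joint probability in the ontological model,
\begin{equation*}
p_- = \int_{-\infty}^0 dx \int d\lambda' d\lambda\, p_{P_*}(\lambda)\, p_{\mathcal{M}_W}(x,\lambda'|\lambda)\, p_{M_F}(y=1|\lambda'),
\end{equation*}
and then split the $x$-integral into the marginal over outcomes by separating the ``detected'' ($x$ near $1$) from ``undisturbed'' ($x$ near $0$) contributions. The key move is to use the noncontextuality constraint~\eqref{eq:noncontextualMw}: since the response function factorizes as $p_{M_W}(x|\lambda) = q(x-1)p_{M_\E}(y=1|\lambda)+q(x)p_{M_\E}(y=0|\lambda)$, and $q$ has median $x=0$, integrating over $x\in(-\infty,0]$ picks up a factor of exactly $\tfrac12$ from the $q(x)$ term (the full term centered at $0$) and at most $\tfrac12$ from the $q(x-1)$ term (the tail of a distribution centered at $1$). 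This is precisely where the median hypothesis does its work.

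The subtlety is that $p_-$ involves the full \emph{instrument} $p_{\mathcal{M}_W}(x,\lambda'|\lambda)$, not just the response function $p_{M_W}(x|\lambda)$, because of the post-selection factor $p_{M_F}(y=1|\lambda')$ depending on the updated variable $\lambda'$. So I cannot directly apply~\eqref{eq:noncontextualMw}, which only constrains the marginal response. My approach would be to bound $p_-$ from above by relaxing the position cut: for the component of the instrument associated to $[y=1|M_\E]$ (the one centered at $x=1$) I would bound its $x\in(-\infty,0]$ integral by its \emph{full} $x$-integral, which reconstructs the disturbance channel $\mathcal{M}$ acting on that branch; and for the component associated to $[y=0|M_\E]$ (centered at $x=0$) I would use the median to extract the factor $\tfrac12$. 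Assembling these gives a term $\tfrac12 p_F$ coming from the failed-detection branch plus a term controlled by the disturbance, which by condition~\ref{condition:thm1_2} and~\eqref{eq:noncontextualM} is bounded using $p_d$; the identity part of $\mathcal{M}=(1-p_d)\mathcal{I}+p_d\mathcal{M}^D$ leaves the state undisturbed and contributes along with $p_F$, while the $p_d\mathcal{M}^D$ part can post-select with probability at most one, yielding the $(1-p_F)p_d$ piece.

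The main obstacle I anticipate is making the decomposition of the instrument rigorous while only the marginals (the response function of $M_W$ and the channel $\mathcal{M}$) are constrained by noncontextuality. I would need to introduce the conditional state-update distributions for each measurement branch and carefully track how the median-$\tfrac12$ bound on the undisturbed branch combines with the channel-level constraint on the disturbed branch, ensuring no double counting and that every bound is a genuine upper bound (replacing a partial $x$-integral by a full one, or a post-selection probability by $1$). Once the instrument is split into its $M_\E$-labelled pieces consistent with the operational equivalence, the remaining steps are collecting terms: the undisturbed, correctly-postselected contribution gives $\tfrac{1}{2}p_F$, and the disturbance-induced contribution is capped by $(1-p_F)p_d$, giving the claimed bound $p_-^{NC}=p_F\tfrac12+(1-p_F)p_d$.
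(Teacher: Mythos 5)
There is a genuine gap, and it is precisely the step you flag as ``the main obstacle'' and then assume away: the decomposition of the \emph{instrument} into $M_\E$-labelled branches is not licensed by the hypotheses. Condition~1 is an operational equivalence between \emph{measurements}, so measurement noncontextuality constrains only the response function $p_{M_W}(x|\lambda)=\int d\lambda'\,p_{\mathcal{M}^W}(x,\lambda'|\lambda)$; condition~2 is an operational equivalence between \emph{transformations}, so it constrains only the other marginal $p_{\mathcal{M}}(\lambda'|\lambda)=\int dx\,p_{\mathcal{M}^W}(x,\lambda'|\lambda)$. Nothing forces the joint $p_{\mathcal{M}^W}(x,\lambda'|\lambda)$ to split as $q(x-1)r_1(\lambda'|\lambda)+q(x)r_0(\lambda'|\lambda)$ in a noncontextual model. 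And even granting such a split, labelling the $q(x)$ branch ``undisturbed'' is unjustified: the channel constraint controls only the total $\mathcal{M}$, not the per-branch state updates, so the $q(x)$ branch could itself steer the ontic state toward the postselection and your $\tfrac12 p_F$ term would not follow. As it stands the argument cannot be completed.

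The paper's proof avoids this by partitioning in $\lambda'$-space rather than branch-space: it separates the terms with $\lambda'=\lambda$ (more generally, those $\lambda'$ whose postselection probability does not exceed that of $\lambda$) from the rest. On the first set one uses the trivial bound $p_{\mathcal{M}^W}(x,\lambda|\lambda)\le p_{M_W}(x|\lambda)$ and then applies the median argument to the response function, yielding $p_F/2$; on the second set one bounds $\int_{-\infty}^0 p_{\mathcal{M}^W}(x,\lambda'|\lambda)\,dx \le p_{\mathcal{M}}(\lambda'|\lambda) = p_d\,p_{\mathcal{M}^D}(\lambda'|\lambda)$ for $\lambda'\neq\lambda$, using transformation noncontextuality together with $p_{\mathcal{I}}(\lambda'|\lambda)=\delta(\lambda'-\lambda)$, and then bounds the postselection response by $1$ to get the $p_d$ term (the sharper coefficient $(1-p_F)p_d$ comes from tracking a correction term exchanged between the two sets). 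Your two bounding moves --- the median for one piece, and ``replace the partial $x$-integral by the full one and invoke the channel decomposition'' for the other --- are exactly the right tools, but they must be applied to this $\lambda'$-partition, which is the one the assumptions actually control.
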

It follows from the first assumption that the marginal probability of the weak measurement $M_W$ giving a negative result is at most $\frac12$. If the system was totally undisturbed then the post-selection would occur independently with probability $p_F$. This would give a joint probability of negative result and post-selection of at most $\frac{P_F}2$. Our inequality shows that noncontextual models cannot explain measurement-disturbance increasing the joint probability $p_-$ above this no-disturbance bound by more than $O(p_d)$. 

We provide the proof of this theorem in Appendix~\ref{proof1}. However, to give some intuition we give here a simplified proof that holds for a finite ontic state space and only derives a weaker noncontextuality bound (but still strong enough that it suffices to prove that real anomalous weak values are contextual):
\begin{proof}[Proof (simplified version)]
		In the ontological model
		\begin{equation}\label{finiteform}
		p_- = \int_{-\infty}^0 \sum_{\lambda',\lambda} p_{M_F}(y=1|\lambda')p_{\mathcal{M}^W}(x, \lambda'|\lambda)p_{P_*}(\lambda)dx, 
	    \end{equation}
		and
		\[
		  p_F = \sum_\lambda p_{M_F}(y=1|\lambda) p_{P*}(\lambda).
		\]
As discussed in Sec.~\ref{sec:noncontextualdescription},
		 $p_{M_W}(x|\lambda) = \sum_{\lambda'}p_{\mathcal{M}^W}(x, \lambda'|\lambda)$. Hence $p_{M_W}(x|\lambda) \geq p_{\mathcal{M}^W}(x, \lambda|\lambda)$. Using measurement noncontextuality and Eq.~\eqref{eq:thm1}, one obtains Eq.~\eqref{eq:noncontextualMw}, i.e.
\[
p_{M_W}(x|\lambda) = G^2_s(x-1) p_{M_\E}(y=1|\lambda) + G^2_s(x) p_{M_\E}(y=0|\lambda).
\]
		 Since $G^2_s(x)$ has median zero, this immediately implies 
		 \begin{equation*}
		   \int_{-\infty}^0p_{M_W}(x|\lambda) dx \leq \frac{p_{M_\E}(y=1|\lambda) + p_{M_\E}(y=0|\lambda)}2 = \frac12.
		 \end{equation*} 
		 Hence, for the terms in Eq.~\eqref{finiteform} with $\lambda' = \lambda$ we have
		\begin{align*}
	&	\int_{-\infty}^0 \sum_{\lambda} p_{M_F}(y=1|\lambda)p_{\mathcal{M}^W}(x, \lambda|\lambda)p_{P_*}(\lambda)dx  \\ &\leq
		\int_{-\infty}^0 \sum_{\lambda} p_{M_F}(y=1|\lambda)	p_{M_W}(x|\lambda) p_{P_*}(\lambda)dx \\
		&\leq  \frac{1}{2}  \sum_{\lambda} p_{M_F}(y=1|\lambda)	p_{P_*}(\lambda)  = \frac{p_F}{2}.
		\end{align*}
		Furthermore, as discussed in Sec.~\ref{sec:noncontextualdescription}, $p_\mathcal{M}(\lambda'|\lambda) = \int_{-\infty}^\infty p_{\mathcal{M}^W}(x, \lambda'|\lambda)$, hence $p_\mathcal{M}(\lambda'|\lambda)  \geq \int_{-\infty}^0 p_{\mathcal{M}^W}(x, \lambda'|\lambda)dx$. By Eq.~\eqref{eq:thm1_2} and transformation noncontextuality we have Eq.~\eqref{eq:noncontextualM}, i.e.
		\[
p_{\mathcal{M}}(\lambda'|\lambda) = (1-p_d) p_{\mathcal{I}}(\lambda'|\lambda) + p_d p_{\mathcal{M}^D}(\lambda'|\lambda).
\]
Then, since $p_{\mathcal{I}}(\lambda'|\lambda) = \delta_{\lambda'\lambda}$ (e.g. using noncontextuality and taking into account that $\mathcal{I}$ can be implemented by letting no time pass, so that no dynamics can occur) one has that, for $\lambda' \neq \lambda$, 	$p_{\mathcal{M}}(\lambda'|\lambda) =  p_d p_{\mathcal{M}^D}(\lambda'|\lambda)$. Hence, for the terms of Eq.~\eqref{finiteform} with $\lambda' \neq \lambda$ we have
			\begin{align*}
			  &	\int_{-\infty}^0 \sum_{\lambda} \sum_{\lambda' \neq \lambda} p_{M_F}(y=1|\lambda')p_{\mathcal{M}^W}(x, \lambda'|\lambda)p_{P_*}(\lambda)dx  \\
			  & \leq  \sum_{\lambda} \sum_{\lambda' \neq \lambda} p_{M_F}(y=1|\lambda')p_{\mathcal{M}}( \lambda'|\lambda)p_{P_*}(\lambda) \\
			  &  = p_d  \sum_{\lambda} \sum_{\lambda' \neq \lambda} p_{M_F}(y=1|\lambda')p_{\mathcal{M}^D}(\lambda'|\lambda)p_{P_*}(\lambda) \\
			  & \leq p_d \sum_{\lambda} \sum_{\lambda' \neq \lambda} p_{\mathcal{M}^D}(\lambda'|\lambda)p_{P_*}(\lambda) \\
			  & \leq p_d \sum_{\lambda} p_{P_*}(\lambda) \\
			  & = p_d.
		\end{align*}
		Summing the $\lambda' = \lambda$ and $\lambda' \neq \lambda$ terms gives \mbox{$p_- \leq p_F/2 + p_d$}.
	\end{proof}

Our first illustration of how this theorem operates is in the idealized scenario discussed above. First, the operational equivalences in Eq.~\eqref{eq:thm1} and Eq.~\eqref{eq:thm1_2} are satisfied with $q(x) = G^2_s(x)$, due to Eq.~\eqref{eq:M_Wopeq} and Eq.~\eqref{eq:M_opeq}, respectively. Furthermore, $p_d = \frac{1-e^{-1/4s^2}}{2} = o(1/s^2)$. Hence, from the above theorem, the data can only be explained by a non-contextual ontological model if the probability $p_-$ of passing the postselection and displaying a negative pointer position is 
\begin{equation}
p_- \leq p_F /2 + o(1/s).
\end{equation} 

However, quantum mechanically $p_- = 
p^{\rm ideal}_-$ as given by Eq.~\eqref{eq:p-ideal}. When $\re \left(\braket{\Pi_\phi\E}_{\rho_*}\right)\geq 0$, $p_-$ is always smaller than $p_F/2$ for $s$ large enough. However, whenever $\re \left(\braket{\Pi_\phi\E}_{\rho_*}\right)<0$ (anomalous real weak value) there exists an $s$ large enough for which $p^{\rm ideal}_- > p_F /2 + o(1/s)$, from which we obtain a proof of contextuality.  


Note already that this statement does not require the preparation to be pure, as is the case in standard formulations. However, going beyond this, our theorem does not require the post-selection to be exactly projective either. 
For example, let us assume that unbiased noise is present in the post-selection, i.e. in the quantum description,
\begin{multline}
\label{eq:noisypostselection-main}
\{[y=1|M_F],[y=0|M_F] \} =\\ (1-2\epsilon)\{\Pi_\phi, \iden - \Pi_\phi\} + 2\epsilon \{\iden/2,\iden/2\},
\end{multline} 
where $\epsilon\in\left(0,\frac{1}{2}\right)$.
We show in Appendix~\ref{quantval} that the operational equivalences of Eq.~\eqref{eq:thm1} and \eqref{eq:thm1_2},  are still satisfied and, furthermore, 
\begin{equation}
 p_- = p^{\rm noisy}_- := \frac{p_F}{2}  - \frac{1}{\sqrt{\pi}s}\re\left(\braket{[y=1|M_F] \E}_{\rho_*}\right)   + o\left(\frac{1}{s}\right) .\label{eq:quantump-}
\end{equation}

Hence, if $p^{\rm noisy}_- > p^{NC}_- = p_F/2 + o(1/s)$ the experiment still provides a proof of contextuality. As is intuitive, $p^{\rm noisy}_- $ is determined by a \emph{noisy weak value}, whose relation with the ideal one can be inferred from
\begin{equation}
\nonumber
\re\left(\braket{[y=1|M_F] \E}_{\rho_*}   \right) = (1-2\epsilon) \re\left(\braket{\Pi_\phi\E}_{\rho_*}\right) + \epsilon p_\E, 
\end{equation}
where $p_\E :=\tr{}{\E \rho_*}$. This clarifies that the noise, parametrized by $\epsilon$, linearly `damps' the potential negativity of the weak value. In fact, using $\re\left(\braket{\Pi_\phi\E}_{\rho_*}\right) \geq -1/8$ (Eq.~(41) of Ref.~\cite{allahverdyan2014nonequilibrium}), we can estimate the noise threshold for $p^{\rm noisy}_- > p^{NC}_-$ in Theorem~\ref{thm:1} to be $\epsilon < \frac{1}{2 + 8 p_\E}$.   

As an experimental proposal, one can consider the setup of Ref.~\cite{piacentini16}. The measured $p^{\rm noisy}_-$ is well above $p^{\rm NC}_-$. Hence, if Eqs.~\eqref{eq:thm1} and \eqref{eq:thm1_2} were verified (only Eq.~\eqref{eq:thm1} is claimed), the experiment would be a proof of contextuality from AWV. The importance of checking all the operational equivalences $\simeq$ will be stressed later (Sec.~\ref{sec:necessity}), when we show that, if even one of them is dropped, a classical model exists reproducing the anomaly. 

We conclude this section by discussing in more detail the relation between Theorem~\ref{thm:1} and the main theorem of Ref.~\cite{pusey14}.  One can note that Eq.~\eqref{eq:thm1} is exactly the first operational equivalence used in Ref.~\cite{pusey14}, while Eq.~\eqref{eq:thm1_2} is a stronger operational requirement than the second equivalence of Ref.~\cite{pusey14}, as Eq.~\eqref{eq:thm1_2} involves the transformation rather than the measurement. Importantly, Theorem~\ref{thm:1} makes no reference to the properties of $[y|M_F]$ (for example, we do not require any of the properties associated with projective measurements in quantum theory), which is what allowed the above discussion of experimental proofs of contextuality from AWV in non ideal scenarios. As a minor difference, the inequality derived in Ref.~\cite{pusey14} is\footnote{Notice that what was called $p_-$ in \cite{pusey14} is what we call $\frac{p_-}{p_F}$ here.} $p_- \leq \frac12 p_F + p_d$, whereas we now obtain $p_- \leq \frac12p_F + (1-p_F)p_d$. Since $0 < 1-p_F < 1$ the new bound is strictly stronger, although because $p_d$ will typically be very small the improvement is minor. We will later provide evidence that the improved bound is tight.

\subsection{Contextuality from imaginary weak value}\label{imawv}

Our second theorem concerns the imaginary part of the weak value. The theorem of Ref.~\cite{pusey14} does not imply any connection between $\im\left( {}_\phi{\langle O \rangle}_\rho \right) \neq 0$ and contextuality; furthermore, the imaginary weak value has analogues in classical models \cite{bartlett12, karanjai15}. Nevertheless, we show that quantum mechanical imaginary weak values are contextual. This complements the results of Ref.~\cite{pusey14} by showing that \emph{every} anomalous weak value is nonclassical -- not just those with an anomalous real part. 

Let us recall how the imaginary part of a weak value is accessed experimentally. Suppose we keep the same initial pointer state $\ket{\Psi}_P$ as Eq.~\eqref{eq:gaussianpointer} and same interaction Hamiltonian $H = \E \otimes \Mmtm$ applied for a unit time ($\E$ is some projector and $\Mmtm$ the momentum operator of the pointer). However, the pointer is measured in the momentum basis $\{\ket{\mmtm}\}$. This gives a POVM $[\mmtm|M_W] = N^\dag_\mmtm N_\mmtm$ on the system with:
\begin{equation}
  N_\mmtm =\braket{\mmtm|e^{-i H} |\Psi}_P= \braket{\mmtm|\Psi}_P(\exp(-i\mmtm)\E + \E^\perp),
  \label{eq:N_p-mainkraus}
\end{equation}
so that
\begin{equation}
  [\mmtm|M_W] = \abs{\braket{\mmtm|\Psi}_P}^2 (\E + \E^\perp) = \abs{\braket{\mmtm|\Psi}_P}^2 \iden,
  \label{eq:N_p-main}
\end{equation}
with $\braket{\mmtm|\Psi}_P = \pi^{-1/4}\sqrt{s}\exp\left(-\frac{p^2s^2}2\right)$. Note that these are exactly the POVM elements for a trivial measurement sampling from the probability distribution $\abs{\braket{\mmtm|\Psi}_P}^2$, which has median zero.

The choice of measurement on the pointer does not affect the marginal channel on the system and so Eq.~\eqref{eq:M_opeq} is still satisfied with $p_d = o(1/s)$:
\begin{multline}
\label{eq:M-main}
\mathcal{M}(\cdot) = \int_{-\infty}^{+\infty} d\mmtm \mathcal{M}^W_\mmtm(\cdot) = \int_{-\infty}^{+\infty} d\mmtm N_\mmtm(\cdot) N^\dag_\mmtm 
\\=(1-p_d) \mathcal{I}(\cdot) + p_d \mathcal{M}^D(\cdot).
\end{multline}
Furthermore, we show in Appendix~\ref{imaginaryappendix} that in the ideal case we obtain a negative momentum and successful postselection with probability
\begin{equation}
\label{eq:p-momentum}
p^{\rm ideal}_- = \frac{p_F}{2} - \frac1{\sqrt{\pi}s}\im\left(\braket{\Pi_\phi \E}_{\rho_*}\right) + o\left(\frac{1}{s}\right).  
\end{equation}

Given the above setting, that nonzero imaginary values of the weak values are a proof of contextuality is a consequence of the following theorem, proven in Appendix~\ref{proof1}:
\begin{thm}
  \label{thm:2}
	Suppose we have a noncontextual ontological model and that:
	\begin{enumerate}
	   \item If $M_\text{triv}$ involves ignoring the system and sampling a $\mmtm \in \mathbb{R}$ that is negative with probability $\frac12$,
	   \begin{equation}
	  [\mmtm|M_W] \simeq [\mmtm|M_\text{triv}].
	  \label{eq:thm2}
	  \end{equation}
		\item If $\mathcal{M}:= \int \mathcal{M}^W_\mmtm d\mmtm$, there exists $p_d \in [0,1]$ such that
		\begin{equation}
			\mathcal{M} \simeq (1-p_d) \mathcal{I} + p_d \mathcal{M}^D, 
			\label{eq:thm2_2}
		\end{equation}
		where $\mathcal{I}$ denotes the identity transformation and $\mathcal{M}^D$ some other transformation.
	\end{enumerate}
	Then if $p_-:=  \int_{-\infty}^0 p(\mmtm,y=1| P_*, \mathcal{M}^W,M_F) d\mmtm$ and $p_F := p(y=1|P_*,M_F)$,
	\begin{equation}
	\label{eq:NCinequalityimaginary}
		p_- \leq p^{NC}_- = p_F\frac12  + (1-p_F)p_d.
	\end{equation}
\end{thm}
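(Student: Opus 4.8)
The plan is to follow the simplified proof of Theorem~\ref{thm:1} almost verbatim, since the two results differ only in their first assumption, and then to sharpen the bookkeeping in order to recover the tight constant $(1-p_F)p_d$ in place of the looser $p_d$ obtained there. Working (for intuition) with a finite ontic space, I would first expand
\begin{equation*}
p_- = \int_{-\infty}^0 d\mmtm \sum_{\lambda,\lambda'} p_{M_F}(y=1|\lambda')\, p_{\mathcal{M}^W}(\mmtm,\lambda'|\lambda)\, p_{P_*}(\lambda),
\end{equation*}
with $p_F = \sum_\lambda p_{M_F}(y=1|\lambda)\, p_{P_*}(\lambda)$, writing $f(\lambda):=p_{M_F}(y=1|\lambda)\in[0,1]$ and splitting the sum over $\lambda'$ into the diagonal term $\lambda'=\lambda$ and the off-diagonal terms $\lambda'\neq\lambda$.

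The only genuinely new ingredient enters through the first assumption. Whereas measurement noncontextuality applied to Eq.~\eqref{eq:thm1} only gave $\int_{-\infty}^0 p_{M_W}(\mmtm|\lambda)\,d\mmtm \le \tfrac12$ in Theorem~\ref{thm:1}, Eq.~\eqref{eq:thm2} states that $[\mmtm|M_W]$ is operationally indistinguishable from a measurement $M_\text{triv}$ that ignores the system. Measurement noncontextuality then forces the response function $p_{M_W}(\mmtm|\lambda)=p_{M_\text{triv}}(\mmtm)$ to be independent of $\lambda$, and since $M_\text{triv}$ returns a negative momentum with probability exactly $\tfrac12$ we get $\int_{-\infty}^0 p_{M_W}(\mmtm|\lambda)\,d\mmtm = \tfrac12$ for every $\lambda$. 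This is cleaner than the Gaussian-median argument behind Theorem~\ref{thm:1}, but the bound itself only needs the weaker inequality.

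For the off-diagonal contributions I would reason exactly as in Theorem~\ref{thm:1}. Since $p_\mathcal{M}(\lambda'|\lambda)=\int_{-\infty}^{\infty} p_{\mathcal{M}^W}(\mmtm,\lambda'|\lambda)\,d\mmtm \ge \int_{-\infty}^0 p_{\mathcal{M}^W}(\mmtm,\lambda'|\lambda)\,d\mmtm$, and since transformation noncontextuality applied to Eq.~\eqref{eq:thm2_2} together with $p_{\mathcal{I}}(\lambda'|\lambda)=\delta_{\lambda'\lambda}$ yields $p_\mathcal{M}(\lambda'|\lambda)=p_d\, p_{\mathcal{M}^D}(\lambda'|\lambda)$ for $\lambda'\neq\lambda$, the total off-diagonal negative-momentum weight out of each $\lambda$,
\begin{equation*}
G_\lambda := \sum_{\lambda'\neq\lambda}\int_{-\infty}^0 p_{\mathcal{M}^W}(\mmtm,\lambda'|\lambda)\,d\mmtm,
\end{equation*}
obeys $G_\lambda \le \sum_{\lambda'\neq\lambda} p_\mathcal{M}(\lambda'|\lambda) = p_d\sum_{\lambda'\neq\lambda} p_{\mathcal{M}^D}(\lambda'|\lambda)\le p_d$. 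By the previous paragraph the diagonal negative-momentum weight is then exactly $\tfrac12-G_\lambda$.

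The tight constant comes from not separating the two pieces. Bounding $f\le 1$ only inside the off-diagonal sum,
\begin{align*}
p_- &\le \sum_\lambda p_{P_*}(\lambda)\Big[ f(\lambda)\big(\tfrac12-G_\lambda\big) + G_\lambda \Big] \\
&= \sum_\lambda p_{P_*}(\lambda)\Big[ \tfrac{f(\lambda)}2 + \big(1-f(\lambda)\big)G_\lambda \Big] \le \tfrac{p_F}2 + (1-p_F)p_d,
\end{align*}
where the last step uses $G_\lambda\le p_d$ with the nonnegative coefficient $1-f(\lambda)$ and $\sum_\lambda p_{P_*}(\lambda)\big(1-f(\lambda)\big)=1-p_F$. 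The main obstacle is precisely this refinement: one must carry the factor $1-f(\lambda)$ through rather than collapsing $\big(1-f(\lambda)\big)G_\lambda$ into a bare $p_d$, so that its preparation average produces $1-p_F$; this is exactly what upgrades the simplified bound to the tight one. The remaining technical point is to make the diagonal/off-diagonal split rigorous on a continuous ontic space: there $p_{\mathcal{I}}(\lambda'|\lambda)=\delta(\lambda'-\lambda)$ is a Dirac delta, so Eq.~\eqref{eq:thm2_2} decomposes the negative-momentum transport measure into an atomic part at $\lambda'=\lambda$ of weight $\tfrac12-G_\lambda$ and a diffuse part of total mass $G_\lambda\le p_d$, and the same estimate goes through once these are separated carefully.
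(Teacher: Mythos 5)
Your argument is correct and yields the full bound $p_F\tfrac12+(1-p_F)p_d$, but it reaches it by a different route than the paper. The paper proves Theorem~\ref{thm:2} as a corollary of a general template (Lemma~\ref{lem:transformationNC}): condition~1 is used only to establish $\int_{-\infty}^0 p_{M_W}(\mmtm|\lambda)\,d\mmtm\leq\tilde p=\tfrac12$, and the tight constant is then extracted not by a diagonal/off-diagonal split but by partitioning the ontic space, for each $\lambda$, into $\Lambda^\lambda_1=\{\lambda':p_{M_F}(y{=}1|\lambda')\leq p_{M_F}(y{=}1|\lambda)\}$ and its complement (the ``usefully disturbed'' states), carrying a correction term $c$ between the two pieces. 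The advantage of that partition is precisely at the point you flag as your remaining technical issue: on a continuous ontic space one never needs to isolate an atomic part of the transport measure at $\lambda'=\lambda$, because the only place the Dirac delta $p_{\mathcal I}(\lambda'|\lambda)=\delta(\lambda'-\lambda)$ enters is multiplied by $p_{M_F}(y{=}1|\lambda')-p_{M_F}(y{=}1|\lambda)$, which vanishes on the diagonal, so the delta contributes nothing and the $p_d\,p_{\mathcal M^D}$ piece survives alone. Your version instead sharpens the simplified proof from the main text: the key observation that $f(\lambda)(\tfrac12-G_\lambda)+G_\lambda=\tfrac{f(\lambda)}2+(1-f(\lambda))G_\lambda$, followed by averaging $1-f(\lambda)$ against $p_{P_*}$, is a clean and correct way to upgrade $p_d$ to $(1-p_F)p_d$, and your use of the exact triviality of $M_W$ (response function independent of $\lambda$) is indeed simpler than the median argument needed for Theorem~\ref{thm:1}. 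The measure-theoretic completion you sketch (dominating the restriction of $\int_{-\infty}^0 p_{\mathcal M^W}(\mmtm,\cdot|\lambda)\,d\mmtm$ to $\{\lambda\}^c$ by $p_d\,p_{\mathcal M^D}(\cdot|\lambda)$) does go through, so this is a genuine gap only in rigor of presentation, not in substance; if you want to avoid it entirely, the paper's $\Lambda^\lambda_1/\Lambda^\lambda_2$ decomposition is the cleaner device.
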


Note that the result requires no mention of $M_\E$. The assumptions of the theorem are satisfied by the experimental setting measuring the imaginary part of the weak value. In fact, Eq.~\eqref{eq:thm2} and Eq.~\eqref{eq:thm2_2} follow immediately from Eq.~\eqref{eq:N_p-main} and Eq.~\eqref{eq:M-main}, respectively. Hence, since $p_d=o(1/s)$, as before if we observe $p_- > p^{NC}_- = p_F\frac12 + o(1/s)$ we have a proof of contextuality. From Eq.~\eqref{eq:p-momentum} this happens whenever $\im\left({}_{\phi}{\langle \E \rangle}_{\rho}\right)$ is negative (recall from Sec.~\ref{stage} that an non-zero imaginary part can be taken negative without loss of generality). Hence, imaginary weak values are contextual. Together with the theorem of the previous section, this shows that \emph{all} (real or imaginary) anomalous weak values are contextual. The theorem also covers noisy post-selection in exactly the same way as we discuss after Theorem~\ref{thm:1}, with $p^{\rm ideal}_-$ of Eq.~\eqref{eq:p-momentum} substituted by a noisy analogue involving a `noisy imaginary weak value', as discussed in Eq.~\eqref{eq:quantump-} for the real part.

%
%

\subsubsection*{The status of imaginary weak values}
 
This result contrasts with the dismissal of the imaginary parts of weak values in Ref.~\cite{pusey14}. The discussion there begins by pointing out that 
\begin{quote}
	``the imaginary part [\ldots] is manifested very differently from the real part \cite{jozsa07}.''
\end{quote}

This is true, and explains why the proof of contextuality has to be adapted slightly to apply to this case. More formally, we could note that the relevant Kraus operators of the weak measurement on the system when we access the pointer's momentum 
are proportional to unitaries $\exp(-i\E \mmtm)$ (see Eq.~\eqref{eq:N_p-mainkraus}). Hence, the same instrument could be achieved by classically sampling an ``outcome'' $\mmtm$ (as in $M_\text{triv}$ above) and then directly performing the appropriate unitary. When we do things this way, it is clear that the correlation between the sampled outcome and the post-selection is purely due to the fact we have disturbed the system by applying a unitary. Yet, since the same instrument is implemented as in the measurement of the imaginary part of the weak value, the same proof of contextuality holds for this sampling scheme. Nonclassicality arises in this case because the unitaries are strong enough to significantly affect the post-selection and yet they average out to something very close to the identity channel. Whilst the leading-order effect of the unitaries on the post-selection is captured by exactly the imaginary part of the weak value, if one has actually implemented the instrument by applying various unitaries it is unclear why this should be expected to reveal anything about the ``value'' of the system observable.

This brings us to the next sentence of Ref.~\cite{pusey14}, which gives a specific argument against imaginary weak values being nonclassical:
\begin{quote}
	``Indeed complex weak values are easily obtained even in the Gaussian subset of quantum mechanics, which has weak measurements (with the same information-tradeoff disturbance [sic] utilised here) and yet admits a very natural non-contextual model \cite{bartlett12}.''
\end{quote}
Weak measurements in the referenced model have since been explored in detail by Karanjai \emph{et.\@ al.\@} \cite{karanjai15}. The model gives definite values to all the allowed observables and so one can meaningfully talk about what values observables truly have independently of any measurement. It is found in Ref.~\cite{karanjai15} that the real part of the weak value reflects the true average value of the observable given the information from the preparation and postselection. Imaginary parts can also arise, but they are purely an artefact of disturbance, in agreement with the discussion above.

Since the model in Ref.~\cite{bartlett12} is noncontextual, the Gaussian subset of quantum mechanics cannot violate any noncontextuality inequalities. But the weak values in the theory do have imaginary parts, and the weak measurements thereof satisfy Eq.~\eqref{eq:thm2}. Therefore the measurements must fail to satisfy Eq.~\eqref{eq:thm2_2} with a sufficiently small $p_d$. In other words, if we measure disturbance using $p_d$ then, contrary to the claim in parenthesis in the quotation above, the weak measurements considered in Ref.~\cite{karanjai15} do \emph{not} have the favorable information-disturbance tradeoff needed to prove contextuality.

We should clarify that this is not in contradiction with our calculations of $p_d$ because those calculations are only valid for the weak measurement of a projector, which has eigenvalues $0$ and $1$. The only observables that can be weakly measured in Ref.~\cite{bartlett12} are linear combinations of position and momentum operators, which all have unbounded spectrum. To get some intuition for why this makes a difference, we can easily generalize the calculation of $\mathcal{M}$ to the case of measuring an operator $O = \sum_i o_i\E_i$ with an arbitrary finite number of eigenvalues $\{o_i\}$, giving
\begin{equation}
\mathcal{M}(\rho) = \sum_{i,j} \exp\left( -\frac{(o_i - o_j)^2}{4s^2} \right)\E_i \rho \E_j.
\end{equation}
Since $\mathcal{I}(\rho) = \sum_{i,j} \E_i \rho \E_j$, to satisfy Eq.~\eqref{eq:thm2_2} we must have $\mathcal{M^D}(\rho) = \sum_{i,j} C_{ij} \E_i \rho \E_j$ with
\begin{equation}
C_{ij} = \frac{1}{p_d}\left(\exp\left( -\frac{(o_i - o_j)^2}{4s^2} \right) - (1 - p_d)\right).\label{eq:Cdef}
\end{equation}
Notice that the Choi-Jamiolkowski state associated to $\mathcal{M}^D$ has a block-diagonal structure in which $C_{ij}$ appear. Hence, $\mathcal{M}^D$ is completely positive if and only if $C_{ij}$
are the entries of a positive matrix. In particular this requires $\abs{C_{ij}} \leq \frac{C_{ii} + C_{jj}}2 = 1$, where $C_{ii} = 1$ follows directly from Eq.~\eqref{eq:Cdef}. The requirements that $C_{ij} \geq -1$ for all $(i,j)$ can be written
\begin{equation}
p_d \geq \frac12 \left( 1 - \min_{i,j}\exp\left( -\frac{(o_i - o_j)^2}{4s^2} \right) \right).
\end{equation}
Hence as we increase the difference between the smallest and largest $o_i$, we need a larger $s$ to ensure a small $p_d$. This suggests that for operators with an unbounded spectrum we should expect that Eq.~\eqref{eq:thm2_2} can only be satisfied with $p_d \geq \frac12$, which is far too large to allow a violation of the noncontextuality inequality in Eq.~\eqref{eq:NCinequalityimaginary}.

\subsection{AWV and contextuality with qubit pointers or coarse graining}\label{sec:discrete}  

While Theorem~\ref{thm:1} removed the idealizations of a perfectly projective postselection and pure input states from the main result of Ref.~\cite{pusey14}, we still followed the traditional approach of introducing weak values using a continuous variable pointer, see Sec.~\ref{stage}. Correspondingly, Theorem~\ref{thm:1} strictly requires an infinite number of operational equivalences to be satisfied, which cannot be checked by finite means. In the following, we will solve this issue. 

It is known that one can follow an experimental setting for measuring weak values that is analogous to the one discussed above but uses a qubit pointer only \cite{wu09}; alternatively, one can consider a coarse graining of $x$ in the traditional setting of Sec.~\ref{stage}. Either way, in these alternative scenarios with finite degrees of freedom we are able to prove that (1) the connection between AWV and contextuality holds and (2) as opposed to Theorem~\ref{thm:1}, the no-go theorem only requires to verify a finite number of operational equivalences. The relevant no-go theorem, proven in Appendix~\ref{proof1}, is given by the following:

\begin{thm}[Noise-robust no-go theorem -- finite version]
	\label{thm:3}
	Suppose we have a noncontextual ontological model and that:
	\begin{enumerate}
	  \item There exists a measurement $M_\E$ and a probability $p_m$ such that
	    \begin{equation}
	    [x|M_W] \simeq p_m [x|M_{\E}] + (1-p_m)[x|M_\text{triv}].
	    \label{eq:thm3}
	    \end{equation}
	    where $M_\text{triv}$ involves ignoring the system and sampling an $x$ that is negative with probability $\frac12$.
	  \item If $\mathcal{M}:= \int \mathcal{M}^W_x dx$, there exists $p_d \in [0,1]$ such that
	    \begin{equation}
		    \mathcal{M} \simeq (1-p_d) \mathcal{I} + p_d \mathcal{M}^D, 
		    \label{eq:thm3_2}
	    \end{equation}
	    where $\mathcal{I}$ denotes the identity transformation and $\mathcal{M}^D$ some other transformation.
	\end{enumerate}
	Then if $p_-:=  \int_{-\infty}^0 p(x,y=1| P_*, \mathcal{M}^W,M_F) dx$ and $p_F := p(y=1|P_*,M_F)$,
	\begin{equation}
	  p_- \leq p_F\frac{1 + p_m}2  + (1-p_F)p_d.
	\end{equation}
\end{thm}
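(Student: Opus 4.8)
The plan is to mirror the simplified proof of Theorem~\ref{thm:1}, writing $p_-$ in the ontological model and peeling off an ``undisturbed'' contribution, but with two modifications needed to land on the exact bound $p_F\frac{1+p_m}{2}+(1-p_F)p_d$. The first modification produces the factor $\frac{1+p_m}{2}$ in place of the $\frac12$ of Theorem~\ref{thm:1}; the second is a simple decomposition trick that tightens the disturbance term from the crude $p_d$ of the simplified argument to $(1-p_F)p_d$. Throughout I abbreviate $f(\lambda):=p_{M_F}(y=1|\lambda)\in[0,1]$ and $A(\lambda',\lambda):=\int_{-\infty}^0 p_{\mathcal{M}^W}(x,\lambda'|\lambda)\,dx$, so that $p_-=\int d\lambda\,d\lambda'\,p_{P_*}(\lambda)A(\lambda',\lambda)f(\lambda')$ and $p_F=\int d\lambda\,p_{P_*}(\lambda)f(\lambda)$ (finite sums in the qubit-pointer or coarse-grained cases).

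The first ingredient is a per-$\lambda$ normalisation bound. Measurement noncontextuality applied to Eq.~\eqref{eq:thm3} gives $p_{M_W}(x|\lambda)=p_m\,p_{M_\E}(x|\lambda)+(1-p_m)\,p_{M_\text{triv}}(x|\lambda)$. Integrating over $x<0$, using that $M_\text{triv}$ outputs a negative $x$ with probability $\tfrac12$ independently of $\lambda$ while $\int_{-\infty}^0 p_{M_\E}(x|\lambda)\,dx\le 1$, yields $\int_{-\infty}^0 p_{M_W}(x|\lambda)\,dx\le p_m+(1-p_m)\tfrac12=\tfrac{1+p_m}{2}=:B$. Since $\int d\lambda'\,A(\lambda',\lambda)=\int_{-\infty}^0 p_{M_W}(x|\lambda)\,dx$, we obtain $\int d\lambda'\,A(\lambda',\lambda)\le B$. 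The second ingredient comes from the transformation side: $A(\lambda',\lambda)\le\int_{-\infty}^{\infty}p_{\mathcal{M}^W}(x,\lambda'|\lambda)\,dx=p_{\mathcal{M}}(\lambda'|\lambda)$, and transformation noncontextuality applied to Eq.~\eqref{eq:thm3_2} together with $p_{\mathcal I}(\lambda'|\lambda)=\delta_{\lambda'\lambda}$ gives, for $\lambda'\neq\lambda$, $A(\lambda',\lambda)\le p_{\mathcal{M}}(\lambda'|\lambda)=p_d\,p_{\mathcal{M}^D}(\lambda'|\lambda)$.

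The main step is the decomposition. For each fixed $\lambda$ I would write $f(\lambda')=f(\lambda)+\bigl(f(\lambda')-f(\lambda)\bigr)$, so that
\[
\int d\lambda'\,A(\lambda',\lambda)f(\lambda')=f(\lambda)\int d\lambda'\,A(\lambda',\lambda)+\int d\lambda'\,A(\lambda',\lambda)\bigl(f(\lambda')-f(\lambda)\bigr).
\]
The first term is at most $f(\lambda)B$ by the normalisation bound. In the second term only $\lambda'$ with $f(\lambda')>f(\lambda)$ (necessarily $\lambda'\neq\lambda$) contribute positively, so dropping the rest can only increase it; on those I use $A(\lambda',\lambda)\le p_d\,p_{\mathcal{M}^D}(\lambda'|\lambda)$ and $f(\lambda')-f(\lambda)\le 1-f(\lambda)$ to get an upper bound $(1-f(\lambda))\,p_d\int d\lambda'\,p_{\mathcal{M}^D}(\lambda'|\lambda)=(1-f(\lambda))p_d$. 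Hence $\int d\lambda'\,A(\lambda',\lambda)f(\lambda')\le f(\lambda)B+(1-f(\lambda))p_d$, and integrating against $p_{P_*}(\lambda)$ gives $p_-\le B\,p_F+(1-p_F)p_d=p_F\frac{1+p_m}{2}+(1-p_F)p_d$, as claimed.

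The only point requiring care is the continuum of ontic states, where $A(\lambda',\lambda)$ must be treated as a measure carrying a diagonal atom (from the $\mathcal I$ branch, $p_{\mathcal I}(\lambda'|\lambda)=\delta_{\lambda'\lambda}$) plus an off-diagonal density. I do not expect this to cause any real difficulty, precisely because the factor $f(\lambda')-f(\lambda)$ in the second term annihilates the diagonal atom automatically: the undisturbed contribution is absorbed entirely into the $f(\lambda)B$ term, and the off-diagonal density is exactly the part controlled by $p_d\,p_{\mathcal{M}^D}$. Thus the decomposition cleanly separates the no-disturbance and disturbance contributions, and the whole argument reduces to finite sums in the qubit-pointer and coarse-grained settings, matching the style of the simplified proof of Theorem~\ref{thm:1}.
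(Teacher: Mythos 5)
Your argument is correct and is essentially the paper's proof: the paper routes Theorem~\ref{thm:3} through Lemma~\ref{lem:transformationNC} with $\tilde p=\frac{1+p_m}{2}$ obtained exactly as in your first ingredient, and the lemma's proof performs the same decomposition you do, just bookkept via a split of $\Lambda$ into ``usefully'' and ``uselessly'' disturbed states with a correction term $c$ rather than writing $f(\lambda')=f(\lambda)+\bigl(f(\lambda')-f(\lambda)\bigr)$ directly. All the substantive steps match --- the normalisation bound from measurement noncontextuality, the off-diagonal bound $A(\lambda',\lambda)\le p_d\,p_{\mathcal{M}^D}(\lambda'|\lambda)$ from transformation noncontextuality with $p_{\mathcal I}(\lambda'|\lambda)=\delta_{\lambda'\lambda}$, and the estimate $f(\lambda')-f(\lambda)\le 1-f(\lambda)$ that yields the improved $(1-p_F)p_d$ term.
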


In appendix~\ref{qubitappendix} we describe a weak measurement scheme using a qubit pointer with small parameter $\epsilon$. The outcome is a discrete $x = \pm 1$ so the integrals over $x$ above reduce to sums. We show that the operational equivalences of Eq.~\eqref{eq:thm3} and Eq.~\eqref{eq:thm3_2} are satisfied with $p_m = 2\epsilon + o(\epsilon)$ and $p_d = o(\epsilon)$ respectively, and calculate
\begin{equation}
  p_- = p_F \frac{1+p_m}{2} - 2\epsilon\re\left(\braket{\Pi_\phi\E}_\rho\right) + o(\epsilon),
\end{equation}
giving contextuality for sufficiently small $\epsilon$ whenever $\re\left(\braket{\Pi_\phi\E}_\rho\right) < 0$, as before.

The same argument can be made for the standard quantum experiment described in Sec.~\ref{stage}, once we coarse grain the pointer position to a two outcome measurement $M^{\rm coarse}_W$ with outcomes $x \leq 1/2$ and $x \geq 1/2$ (i.e., $x$ closest to the eigenvalue $0$ of $\E$, or closest to the eigenvalue $1$). If we now label these outcomes $x = -1$ and $x=+1$ respectively, then the conditions of Theorem~\ref{thm:3} are satisfied with $p_d = o(1/s)$ and $p_m = 1/(\sqrt{\pi}s) + o(1/s)$. Then, for the perfect postselection,
\begin{equation}
  p_- = p_F\frac{1+p_m}2 - \frac{1}{\sqrt{\pi}s}\re\left(\braket{\Pi_\phi \E}_{\rho_*}\right) + o\left( \frac1s \right),   
\end{equation}
which, with large $s$, violates the noncontextuality bound.

\subsection{A remark on the debate concerning AWV}

Theorem~\ref{thm:3} not only tells us that weak value experiments proving contextuality can be conducted with qubit pointers, but also clarifies another issue of the weak value debate. When  Ferrie and Combes presented \emph{discrete} classical toy models reproducing certain aspects of AWV~\cite{ferrie14}, questions were posed if these are good analog of the weak value due the intrinsic discreteness \cite{brodutch15} (as opposed to the standard quantum experiment which is continuous or, when discrete, it is a coarse graining of a continuous measurement~\cite{lu2014experimental}). Theorem~\ref{thm:3} shows that the contextuality of the weak value has nothing to do with the fact that we are performing a measurement of a continuous quantity -- the pointer position or momentum: nonclassicality is present both in the coarse-graining of the standard experiment as well as in an intrinsically discrete experiment. In particular, although the weak value no longer appears simply as an average pointer position, the correct ``scaling procedure'' to determine whether a discrete outcome is sufficiently biased to be considered anomalous can be determined operationally using $p_m$.

\section{An alternative approach to the no-go theorems}
\subsection{Theorem based on measurement and preparation noncontextuality}

In Theorems~\ref{thm:1}-\ref{thm:3} we removed the idealisation of exact post-selection in Ref.~\cite{pusey14} and extended an operational equivalence on a measurement to a correspondent operational equivalence on a transformation, Eq.~\eqref{eq:thm1_2}. In fact, Eq.~\eqref{eq:thm1_2} requires us to check that \emph{every} subsequent measurement on the system is affected little by the weak measurement, whereas the original assumption only required to check that the post-selection is affected little when preceded by the weak measurement. Here we present an alternative approach in which we keep the original, less demanding, assumptions of Ref.~\cite{pusey14}, but we introduce some extra preparations whose aim is to provide an operational measure of how `close to projective' the post-selection is.\footnote{This general strategy to `robustify' contextuality proofs was first proposed in Ref.~\cite{kunjwal15}.}

To do so, we 
\begin{enumerate}
	\item Introduce an ensemble of preparations $[b|S]$, where \mbox{$[b=0|S]$} is prepared with probability $q_0$ and \mbox{$[b=1|S]$} is prepared with probability $q_1=1-q_0$. In practice, we will look for $S$ that maximises the correlations with the corresponding outcomes of the (imperfect) post-selection, i.e. maximising
		\begin{equation*}
		C_{S} := p(b=0,y=0|S,M_F) + p(b=1,y=1|S,M_F),
		\end{equation*}
	where $p(b,y|S,M_F)$ is the probability that $[b|S]$ is prepared and an immediate measurement of $[y|M_F]$ on $[b|S]$ returns outcome $y$.
\item If $P_*$ denotes the input preparation in the standard setting (as  in Sec.~\ref{sec:noncontextualdescription}), include it into an ensemble where $P_*$ is prepared with probability $q_*$  and $P_\perp$ is prepared with probability $q_\perp = 1- q_*$. $P_\perp$ and $q_*$ are chosen such that $q_0[b=0|S] + q_1[b=1|S] \simeq  q_*P_* + q_\perp P_\perp$.
\end{enumerate}

It is useful to spell out what this means in quantum terms when the system being weakly measured is a qubit. We start with $\{M_F,\iden - M_F \}$, the imperfect post-selection POVM, and the preparation $\rho_*$. Then we look for states $\sigma_b$, $b=0,1$, that maximize $\Tr(M_F\sigma_1)$ and $\Tr((\iden-M_F)\sigma_0)$.
	We then need to find suitable $q_b$, $q_*$ and $\rho_\perp$ such that $ q_* \rho_* +  q_\perp \rho_\perp = q_0 \sigma_0 + q_1 \sigma_1$, to satisfy the correspondent operational equivalence. Note that, if we accessed perfect post-selections and preparations, then we would get $C_{S}=1$ by choosing $\sigma_1 = \ketbra{\phi}{\phi}$ and $\sigma_0 = \iden - \ketbra{\phi}{\phi}$. In practice the post-selection is not exactly projective and $\sigma_b$ will never be exactly pure, so that $C_{S} <1$ experimentally. 

We are now able to formulate a no-go theorem using this second strategy. Denoting by $p(x,y|P_*, M_F \circ M_W)$ the probability that, if the system is initialized through the preparation procedure $P_*$ and $[x|M_W]$, $[y|M_F]$ are sequentially measured one obtains outcomes $(x,y)$, we have: 
  \begin{thm}[Noncontextuality inequality based on preparation noncontextuality]
  	\label{th:preparationNC}
  	Suppose we have a noncontextual ontological model and:
  	\begin{enumerate}
	  \item \label{oe1th2}There exists a 2-outcome measurement $M_\E$ and a probability distribution $q(x)$ with median $x=0$ such that, for all $x \in \mathbb{R}$,
	     \begin{equation}
		    [x|M_W] \simeq q(x-1)[y=1|M_{\E}] + q(x)[y=0|M_{\E}].
		    \label{eq:oe1th2}
	      \end{equation}
  		\item \label{oe2th2} Given the sequential measurement $[x,y|M_F \circ M_W]$, define $[y|\tilde{M}_F]:= \int dx [x,y|M_F \circ M_W]$. Then there exists $p_d \in [0,1]$ such that
  		\begin{equation}
  		\label{eq:oe2th2}
  		[y|\tilde{M}_F] \simeq (1-p_d) [y|M_F] + p_d [y|M_D], 
  		\end{equation}
  		for some 2-outcome measurement $[y|M_D]$.
  		\item \label{oepreparation} There exists an ensemble $$\{\{q_*,P_*\},\{q_\perp,P_\perp\}\},$$ such that
  		\begin{equation}
  		\label{eq:oepreparation}
  		q_0[b=0|S] + q_1[b=1|S] \simeq  q_*P_* + q_\perp P_\perp.
  		\end{equation}
  	\end{enumerate}
   Then, if $p_-:=  \int_{-\infty}^0 p(x,y=1|P_*, M_F \circ M_W)dx$ and $p_F := p(y=1|P_*,M_F)$,
   \begin{equation}
  \label{eq:th2inequality}
  p_- \leq p_F \frac12 + (1-p_F)p_d + \frac{1-C_{S}}{2q_*}.
  \end{equation}
  \end{thm}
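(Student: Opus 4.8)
The plan is to run the whole argument inside the ontological model, writing
\[
p_- = \int d\lambda\, p_{P_*}(\lambda)\, f(\lambda), \qquad f(\lambda) := \int_{-\infty}^{0}\!dx \int d\lambda'\, p_{\mathcal{M}_W}(x,\lambda'|\lambda)\, p_{M_F}(y{=}1|\lambda'),
\]
with $p_F = \int d\lambda\, p_{P_*}(\lambda)\, p_{M_F}(y{=}1|\lambda)$, and then to bound $f(\lambda)$ by two complementary inequalities before absorbing the post-selection imperfection using the preparation data. First I would extract the two basic bounds on $f(\lambda)$. Bounding $p_{M_F}(y{=}1|\lambda')\le 1$ and invoking condition~\ref{oe1th2} with measurement noncontextuality (exactly as in the simplified proof of Theorem~\ref{thm:1}, since $q$ has median $0$) gives $f(\lambda)\le \int_{-\infty}^0 p_{M_W}(x|\lambda)\,dx \le \tfrac12$. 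Extending the $x$-integral to all of $\mathbb{R}$ instead gives $f(\lambda)\le \int d\lambda'\, p_{\mathcal M}(\lambda'|\lambda)\, p_{M_F}(1|\lambda') = p_{\tilde M_F}(1|\lambda)$, and condition~\ref{oe2th2} with measurement noncontextuality turns this into $f(\lambda)\le (1-p_d)p_{M_F}(1|\lambda)+p_d\, p_{M_D}(1|\lambda)$.

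Second, I would compare $f(\lambda)$ with the ``sharp-case'' target $\tfrac12 p_{M_F}(1|\lambda)+p_d\, p_{M_F}(0|\lambda)$ by defining the excess $e(\lambda):=f(\lambda)-\tfrac12 p_{M_F}(1|\lambda)-p_d\, p_{M_F}(0|\lambda)$. Feeding in $f\le\tfrac12$ yields $e(\lambda)\le(\tfrac12-p_d)p_{M_F}(0|\lambda)\le\tfrac12 p_{M_F}(0|\lambda)$, while feeding in the disturbance bound, together with $p_{M_D}(1|\lambda)\le1$, yields $e(\lambda)\le\tfrac12 p_{M_F}(1|\lambda)$. Hence $e(\lambda)\le \tfrac12\min\big(p_{M_F}(0|\lambda),\,p_{M_F}(1|\lambda)\big)$, a quantity that vanishes precisely when the post-selection is outcome-deterministic and otherwise measures its non-sharpness. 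Since the sharp part integrates against $p_{P_*}$ to exactly $\tfrac{p_F}{2}+(1-p_F)p_d$, it only remains to show $\int d\lambda\, p_{P_*}(\lambda)\,e(\lambda)\le \tfrac{1-C_S}{2q_*}$.

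Third, this is where the preparation ensemble enters. Condition~\ref{oepreparation} with preparation noncontextuality gives $q_0\, p_{[b=0|S]}(\lambda)+q_1\, p_{[b=1|S]}(\lambda)=q_*\, p_{P_*}(\lambda)+q_\perp\, p_{P_\perp}(\lambda)$, so dropping the nonnegative $P_\perp$ term yields $q_*\, p_{P_*}(\lambda)\le q_0\, p_{[b=0|S]}(\lambda)+q_1\, p_{[b=1|S]}(\lambda)$. I would then bound $\min(p_{M_F}(0|\lambda),p_{M_F}(1|\lambda))$ by $p_{M_F}(1|\lambda)$ inside the $[b=0|S]$ term and by $p_{M_F}(0|\lambda)$ inside the $[b=1|S]$ term, recognising the two integrals as $p(b{=}0,y{=}1|S,M_F)$ and $p(b{=}1,y{=}0|S,M_F)$, which sum to $1-C_S$. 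This gives $q_*\int p_{P_*}\min(p_{M_F}(0|\lambda),p_{M_F}(1|\lambda))\,d\lambda\le 1-C_S$, hence $\int p_{P_*}\,e\le\tfrac{1-C_S}{2q_*}$, and adding the sharp part completes the proof.

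The main obstacle is steps one and two: conditions~\ref{oe1th2} and~\ref{oe2th2} constrain only, respectively, the $x$-marginal of the instrument and its fully $x$-integrated composition with $M_F$, so neither controls the correlation between the \emph{sign} of $x$ and the disturbed ontic state $\lambda'$ that actually appears in $f(\lambda)$. The crucial realisation is that one need not control this correlation directly: by taking the \emph{minimum} of the median bound and the disturbance bound and measuring the gap to the sharp-case target, any dangerous correlation can only survive on ontic states where $M_F$ is non-sharp, and it is exactly that non-sharpness that $C_S$ quantifies operationally through the pairing of the $\min$ with the two $S$-preparations.
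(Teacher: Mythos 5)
Your proof is correct and follows essentially the same route as the paper's (Lemma~\ref{lem:preparationNC} specialised to $\tilde p=\tfrac12$): your two bounds on $f(\lambda)$ are the paper's Inequalities~1 and~2, your $\min\bigl(p_{M_F}(0|\lambda),p_{M_F}(1|\lambda)\bigr)$ is exactly the paper's $1-\zeta(\lambda)$, and your pairing of the $\min$ with the two $S$-preparations reproduces the paper's bound $1-C_S\ge\int p_S(\lambda)(1-\zeta(\lambda))\,d\lambda$ followed by $p_S(\lambda)\ge q_*p_{P_*}(\lambda)$ from preparation noncontextuality. The only cosmetic difference is that you take a pointwise minimum of the two bounds via the excess $e(\lambda)$, whereas the paper partitions $\Lambda$ into $\Lambda_0\sqcup\Lambda_1$ and applies one bound on each piece; these are the same computation.
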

  
  The theorem is proved in Appendix~\ref{proof2}. It parallels Theorem~\ref{thm:1}, in particular Eq.~\eqref{eq:oe1th2} is the same as Eq.~\eqref{eq:thm1}. Because the logical structure of Appendix~\ref{proof2} parallels that of Appendix~\ref{proof1}, theorems parallel to Theorem~\ref{thm:2} and \ref{thm:3} can also be proven, with the assumption in Eq.~\eqref{eq:thm2_2}/ Eq.~\eqref{eq:thm3_2} replaced by the conjunction of Eq.~\eqref{eq:oe2th2} and Eq.~\eqref{eq:oepreparation}. 

Now that the alternative theorem is stated, let us discuss in more detail the differences with our first approach by comparing Theorem~\ref{th:preparationNC} with Theorem~\ref{thm:1}. The requirement Eq.~\eqref{eq:oe1th2} is exactly the same; the operational equivalence of Eq.~\eqref{eq:oe2th2} is strictly weaker than the correspondent Eq.~\eqref{eq:thm1_2}, since the latter requires us to verify that the weak measurement affects only slightly \emph{any} subsequent measurement, whereas the former only requires us to check the same condition for the post-selection measurement $M_F$; the operational equivalence of Eq.~\eqref{eq:oepreparation} is added, and involves the addition of preparations $S$ used for testing the quality of the post-selection, as well as of a preparation $P_\perp$ that provides a nontrivial operational equivalence; finally, the bound on $p_-$ matches the analogue one from Theorem~\ref{thm:1}, with an extra punishing term proportional to $1-C_{S}$. The bound hence becomes increasingly weak as the post-selection departs from the perfect predictability associated with projective measurements in quantum theory.

\subsection{Application: assessing current AWV experiments}\label{currexpts}

The second version of the theorem can also be compared with the quantum mechanical predictions. For example, in the unbiased noise model presented in the previous section one can show that all the operational equivalences of Theorem~\ref{th:preparationNC} are satisfied. Furthermore, one can use Eq.~\eqref{eq:quantump-} and note that $C_{S} = 1-\epsilon$ (see Appendix~\ref{quantval}). 

We can once more compare with the experimental setting of Ref.~\cite{piacentini16}. First, note that only the operational equivalences of Eq.~\eqref{eq:oe1th2} and \eqref{eq:oe2th2} are claimed, so one would need to complete this with Eq.~\eqref{eq:oepreparation} to get that the violation of the bound of Eq.~\eqref{eq:th2inequality} is a proof of contextuality. In other words, in principle the same data can be utilised by simply adding an estimation of the sharpness of the post-selection through an extra preparation satisfying Eq.~\eqref{eq:oepreparation}. We can, in fact, work out from the experimental data how close to projective the post-selection needs to be for the claim of contextuality from AWV of Ref.~\cite{piacentini16} to hold. Specifically, one has $p_d = 0.0019$, $s=8.10336$, $p_F = 0.0475865$, $\frac{p_-}{p_F} = 0.602927$ and (with the obvious choice of fair ensembles) $q_* = 1/2$. One can then estimate that Eq.~\eqref{eq:th2inequality} is satisfied only if $C_{S} > 0.996912$. We see that in this case the post-selection needs to be very close to ideal.\footnote{Not unrealistically close. A quantity comparable to $C_{S}$ was reported as $0.99709(7)$ in another contextuality experiment \cite{mazurek16}.}

\section{Remaining idealisations: perfect operational equivalences}\label{remain}

Some readers may have noticed that there is an idealisation that was not dealt with in Theorems~\ref{thm:1}-\ref{th:preparationNC}. That is, any experiment will only ever verify the operational equivalences $\simeq$  up to some approximation. Luckily, as discussed in Ref.~\cite{mazurek16}, this can be dealt with using a generic technique. One begins by assuming access to a tomographically complete set of procedures that enables the operational equivalences to be checked. The basic idea is then that whilst the ``primary'' procedures (i.e. the ones actually implemented) will not satisfy the operational equivalence exactly, we can use their statistics to find `secondary' preparations in their convex hull\footnote{In general ``supplementary procedures'' have to be implemented to ensure that the convex hull extends in all directions \cite{mazurek16}.} that do satisfy the equivalences exactly. It is to these secondary preparations that we can apply Theorems~\ref{thm:1}--\ref{th:preparationNC}. In particular, as we discussed one can apply Theorem~\ref{thm:3} both to the single qubit pointer experiment, as well as the coarse-grained version of the standard experiment -- meaning that we only need to apply the above discussion to a finite set of operational equivalences. The price for using this technique is that the secondary procedures are more mixed than the primary ones and hence will give smaller values for $C_{S}$ and $p_-$. In that sense, applying this technique builds upon the noise-robustness to non-ideal values of such parameters that we have provided here.

A last comment. The last remaining idealisation at this point is that we assumed we know a tomographically complete set of measurements. Strictly, we cannot prove that a given set of measurement procedures is complete without relying on the quantum formalism. However, one can gather evidence from the experimental data that a given set is complete. This goes beyond the scope of the present work, but is discussed in detail in Ref.~\cite{mazurek16}, and new techniques to address this issue have since been introduced in Ref.~\cite{pusey19}.  

\section{Necessity of operational equivalences for nonclassicality of AWV}
\label{sec:necessity}

We have seen that the statistics collected by the AWV experiment cannot be reproduced by a noncontextual ontological model in the presence of some extra operational constraints:
\begin{enumerate}
  \item Eqs.~\eqref{eq:thm1} and \eqref{eq:thm1_2} in the case of Theorem~\ref{thm:1}, with similar constraints for Theorems~\ref{thm:2}-\ref{thm:3}.
  \item Eqs.~\eqref{eq:oe1th2}--\eqref{eq:oepreparation} in the case of Theorem~\ref{th:preparationNC}.
\end{enumerate}
 At first sight this might sound rather involved, especially if compared to broader claims of nonclassicality of the AWV that appeared in the literature. Here, however, we show that dropping \emph{any} of the operational equivalences in any of the theorems allow for the explicit construction of a classical (noncontextual) ontological model that reproduces the anomaly. In fact, the models even reproduce the full quantum statistics of the sequential measurement on $\rho_*$ and not just the anomaly of the pointer. Hence our conditions are not only sufficient, but they are also necessary, showing that AWV can only be understood as unavoidably quantum in the presence of all the operational constraints described. Hopefully this will help in clarifying the debate that arose around this topic, showing that both `sides' are indeed correct: in a similar way in which non-local correlations are a quantum phenomenon only in a setting in which signalling has been excluded, so AWV are indeed fundamentally quantum, but only when accompanied by certain extra operational facts.
 
 \subsection{Necessity of conditions in Theorems~\ref{thm:1}-\ref{thm:3}}\label{transformation_necessity}
In both of the following models, we take the ontic state $\lambda$ to be $y$, i.e. a determination of the outcome of $M_F$, and we set
\begin{equation}
  p_{P_*}(\lambda) = p(y = \lambda|P_*,M_F).
\end{equation}
  
\subsubsection{Necessity of condition~\ref{condition:thm1}}\label{oe1th1HVM}

The basic idea of our first model is to give results for the weak measurement according to the operational distribution under the predetermined postselection \mbox{$y = \lambda$}. That is, we set
\begin{equation}
  p_{M_W}(x|\lambda) \approx p(x|P_*, \mathcal{M}^W, M_F, y=\lambda).\label{approxHVM1}
\end{equation}

Exact equality in Eq.~\eqref{approxHVM1} would allow us to reproduce the operational distribution over $x$ without any disturbance to the ontic state at all, at the price of violating the conditions on $p_{M_W}$ arising from measurement noncontextuality (a failure of condition~\ref{condition:thm1}). However, we also want to reproduce the operational fact that whether or not the weak measurement is done affects the probabilities of $M_F$ and so we add the minimal amount of disturbance necessary to achieve this. This amount of disturbance is, unsurprisingly, bounded by the $p_d$ from Eq.~\eqref{eq:thm1_2}. We then actually sample $x$ from the operational distribution for $y = \lambda'$, the disturbed ontic state, which is why Eq.~\eqref{approxHVM1} is only approximately true. The model is illustrated in Fig.~\ref{fig:hvm1}, for the full detail of how to implement the minimal disturbance see Appendix~\ref{HVMappendix}.

\begin{figure}
  \includegraphics[width=\columnwidth]{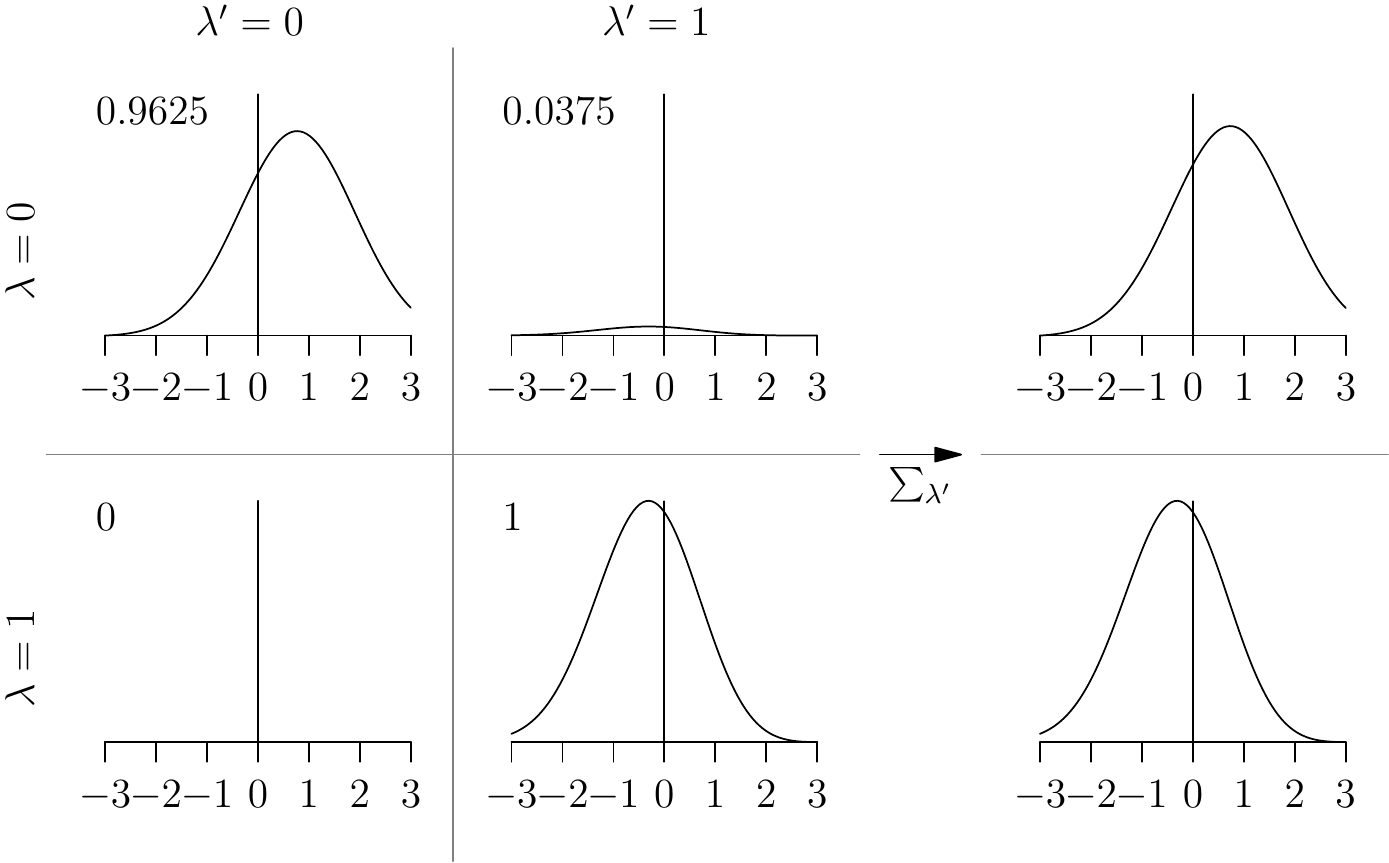}
  \caption{An illustration of the model in Sec.~\ref{oe1th1HVM}. On the left are plots of $p_{\mathcal{M}^W}(x,\lambda'|\lambda)$ against $x$, and the numbers $p_\mathcal{M}(\lambda'|\lambda) = \int_{-\infty}^\infty p_{\mathcal{M}^W}(x,\lambda'|\lambda) dx$. On the right are plots of $p_{M_W}(x|\lambda) = \sum_{\lambda'} p_{\mathcal{M}^W}(x,\lambda'|\lambda)$ against $x$. The operational probabilities used are quantum probabilities from the standard scheme with parameters chosen so that $p_F = \frac{1}{5}$, $p_d = \frac1{20}$ and ${}_{\phi}{\langle \E \rangle}_{\psi} = -\frac12$. (In particular those parameters include a rather small $s \approx 1.5$ to ensure that all features are visible. This $s$ is still large enough for our noncontextuality inequalities to be violated.) Notice on the left that $\lambda = \lambda'$ with high probability, but on the right we see the $\lambda=1$ ontic state is predisposed to give negative values of $x$.\label{fig:hvm1}}
\end{figure}

\subsubsection{Necessity of condition~\ref{condition:thm1_2}}\label{oe2th1HVM}

This time we ignore $\lambda$ and simply distribute $(x, \lambda')$ according to the operational probabilities for $(x,y)$, at the expense of a very large disturbance to the post-selection:
\begin{equation}
  p_{\mathcal{M}^W}(x,\lambda'|\lambda) = p(x,y=\lambda'|P_*, \mathcal{M}^W, M_F).
\end{equation}
By construction
\begin{equation}
  \sum_{\lambda'} p_{\mathcal{M}^W}(x,\lambda'|\lambda) = p(x|P_*, M_W),
\end{equation}
so we satisfy any operational equivalences for $M_W$ (condition~\ref{condition:thm1} is satisfied).

Intuitively, notice that $\lambda=1$ is greatly disturbed by the model since the probability of going to $\lambda' =0$ is $p(y=0|P_*,\mathcal{M}, M_F) \approx 1-p_F$ (the probability of not passing the postselection). This is a failure of condition~\ref{condition:thm1_2} whenever that probability exceeds $p_d$. These features can be seen in Fig.~\ref{fig:hvm2}.

\begin{figure}
  \includegraphics[width=\columnwidth]{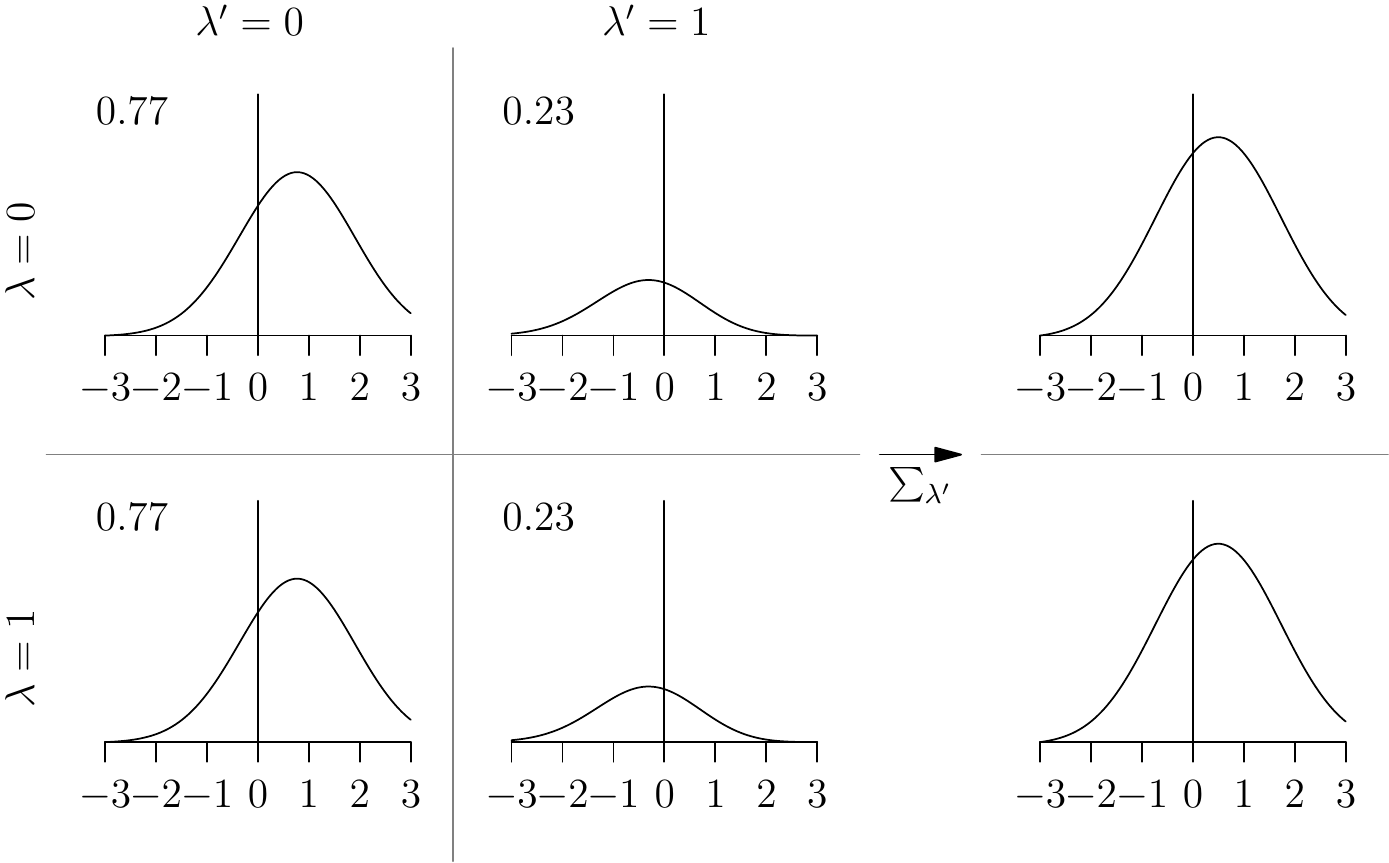}
  \caption{As in Fig.~\ref{fig:hvm1}, but for the model of Sec.~\ref{oe2th1HVM}. Notice on the right that neither ontic state is predisposed to give negative $x$, but on the left we see that the $\lambda = 1$ state is very likely to be disturbed to $\lambda'=0$.\label{fig:hvm2}}
\end{figure}

 \subsection{Necessity of conditions in Theorem~\ref{th:preparationNC}}

\subsubsection{Necessity of condition~\ref{oe1th2}}
This follows from the first model above. To satisfy condition~\ref{oepreparation}, we can set $p(\lambda|P) = p(y|P,M_F)$ for any preparation procedure $P$. This respects convexity and if two procedures are operationally equivalent they will in particular have the same $p(y|P,M_F)$ and hence the same $p(\lambda|P)$, as required by preparation noncontextuality.
	
	\subsubsection{Necessity of condition~\ref{oe2th2}}
This follows similarly from the second model above.

\subsubsection{Necessity of condition~\ref{oepreparation}}

The final ontological we consider is the $\psi$-complete model \cite{harrigan10}, which is well-known to be measurement noncontextual. In fact we will consider the generalization of the $\psi$-complete model to an arbitrary operational theory. The set of ontic states $\lambda$ is identified with the set of (convexly extremal) preparations, $p_P(\lambda) = \delta (\lambda - P)$, and the response functions are  given by the operational probabilities, $p_M(x|\lambda) = p(x|P=\lambda, M)$. This model reproduces the operational probabilities and is measurement noncontextual (that is, satisfies conditions~\ref{oe1th2} and \ref{oe2th2} of Theorem~\ref{th:preparationNC}), however it does not satisfy preparation noncontextuality, since it does not associate the same distributions to the ensembles associated to $S$ and $\{P_*, P_\perp\}$. Hence, condition \ref{oepreparation} cannot be dropped from Theorem~\ref{th:preparationNC}.

\section{Conclusions}  
Our results show that contextuality captures what is nonclassical about anomalous weak values in a way that is experimentally relevant and wide-ranging. In particular, the postselection need not be a perfect projective measurement, the pointer need not be a continuous-variable system, and if there is an imaginary part to the weak value then the real part need not be anomalous.

On the other hand, we have shown through explicit noncontextual models that if any of the operational equivalences we use are absent a classical explanation is possible. 

Our results also answer some of the questions left open in Ref.~\cite{lostaglio18}. There, it was shown that the fluctuation theorem experiments probing the Margenau-Hills work quasi probability introduced in Ref.~\cite{allahverdyan2014nonequilibrium} can witness contextuality. However, it was left open how to make the argument robust to experimental imperfections. Here we gave the tools to do so.

\begin{acknowledgments}
MP thanks Jonathan Barrett and Sina Salek for useful discussions, particularly regarding the imaginary parts of weak values, and disturbance in Ref.~\cite{karanjai15}. ML thanks Fabrizio Piacentini and Marco Genovese for useful discussions. RK thanks Elie Wolfe for discussion on the algorithmic aspects. Research at Perimeter Institute is supported by the Government of Canada through the Department of Innovation, Science and Economic Development Canada and by the Province of Ontario through the Ministry of Research, Innovation and Science. MP is supported by the Royal Commission for the Exhibition of 1851. ML acknowledges financial support from the the European Union's Marie Sklodowska-Curie individual Fellowships (H2020-MSCA-IF-2017, GA794842), Spanish MINECO (Severo Ochoa SEV-2015-0522 and project QIBEQI FIS2016-80773-P), Fundacio Cellex and Generalitat de Catalunya (CERCA Programme and SGR 875).
\end{acknowledgments}

\bibliography{refs}

\onecolumngrid

\begin{appendix}

\section{Ideal, standard quantum scenario}
\label{appendix:standard}

The channel induced by the weak measurement when the outcome is not recorded is $\mathcal{M}(\cdot) = \int_{-\infty}^{+\infty} \mathcal{M}^W_x (\cdot) =\int_{-\infty}^{+\infty} N_x(\cdot) N^\dag_x$. Using the integral $\int_{-\infty}^{+\infty}G_s(x-a)G_s(x-b)dx=e^{-(a-b)^2/4s^2}$ and Eq.~\eqref{eq:N_x}, one finds for every $\rho$
\begin{align}
\mathcal{M}(\rho) &= \mathcal{E}\rho \mathcal{E}+\mathcal{E}^{\perp}\rho\mathcal{E}^{\perp}+e^{-1/4s^2}(\mathcal{E}\rho\mathcal{E}^{\perp}+\mathcal{E}^{\perp}\rho\mathcal{E})
=\frac{1}{2}\rho+\frac{1}{2}(\mathcal{E}-\mathcal{E}^{\perp})\rho(\mathcal{E}-\mathcal{E}^{\perp})+e^{-1/4s^2}\left(\frac{1}{2}\rho-\frac{1}{2}(\mathcal{E}-\mathcal{E}^{\perp})\rho(\mathcal{E}-\mathcal{E}^{\perp})\right)\nonumber\\
&=\frac{1+e^{-1/4s^2}}{2}\rho+\frac{1-e^{-1/4s^2}}{2}(\mathcal{E}-\mathcal{E}^{\perp})\rho(\mathcal{E}-\mathcal{E}^{\perp})
=(1-p_d)\rho+p_d (\E -\mathcal{E}^{\perp})\rho(\mathcal{E}-\mathcal{E}^{\perp}), \nonumber
\end{align}
with $p_d = \frac{1-e^{-1/4s^2}}{2}$. Hence, $\mathcal{M} = p_d \mathcal{I} + (1-p_d) \mathcal{M}_D$, with $\mathcal{M}_D(\rho):= (\E - \E^\perp)\rho (\E - \E^\perp)$. It is then clear that the operational equivalences required by Theorem~\ref{thm:1} are satisfied in the ideal case.

Finally, one can compute $
p^{\rm ideal}_- = \int^0_{-\infty} dx \tr{}{ \Pi_\phi N_x \rho_* N^\dag_x} = \frac{p_F}{2} - \frac{\re\left(\braket{\Pi_\phi\E}_{\rho_*}\right)}{\sqrt{\pi}s} + o\left(\frac{1}{s}\right)$. This is a simple calculation see, e.g., the proof of Lemma~1 in Ref.~\cite{lostaglio18} (note, however, that we redefined $p^{\rm ideal}_-$ without the normalisation by the postselection probability). 
 
\section{Proof of Theorems~\ref{thm:1}-\ref{thm:3} and remarks on tightness of the inequalities}\label{proof1}
All three theorems follow from the same basic argument, hence it is convenient to formulate all of them as corollaries of the following technical lemma:
\begin{lem}[Noncontextuality inequality template 1]
	\label{lem:transformationNC}
	Suppose we have a noncontextual ontological model and that:
	\begin{enumerate}
		\item \label{oe1th1} For any input $\lambda$, the probability of a negative outcome of $[x|M_W]$ is bounded by some value independent of the ontic state:
		\begin{equation}
			\label{eq:oe1th1}
			\int_{-\infty}^0 p_{M_W}(x|\lambda)dx \leq \tilde p.
		\end{equation}
		
		\item \label{oe2th1} If $\mathcal{M}:= \int \mathcal{M}^W_x dx$, there exists $p_d \in [0,1]$ such that
		\begin{equation}
			\label{eq:oe2th1}
			\mathcal{M} \simeq (1-p_d) \mathcal{I} + p_d \mathcal{M}^D, 
		\end{equation}
		where $\mathcal{I}$ denotes the identity transformation and $\mathcal{M}^D$ some other transformation.
	\end{enumerate}
	Then, if $p_-:=  \int_{-\infty}^0 p(x,y=1| P_*, \mathcal{M}^W,M_F) dx$ and $p_F := p(y=1|P_*,M_F)$,
	\begin{equation}
		\label{eq:th1inequality}
		p_- \leq p_F \tilde p + (1-p_F)p_d =: p^{\rm NC}_{-}.
	\end{equation}
\end{lem}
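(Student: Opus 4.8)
The plan is to expand $p_-$ in the ontological model and exploit a single algebraic decomposition of the post-selection response function that cleanly separates the ``no-disturbance'' contribution from the disturbance contribution. Writing $f(\lambda):=p_{M_F}(y=1|\lambda)$, so that $0\le f\le 1$ and $p_F=\int d\lambda\, p_{P_*}(\lambda) f(\lambda)$, the model gives
\[
p_- = \int_{-\infty}^0 dx \int d\lambda \int d\lambda'\; p_{P_*}(\lambda)\, p_{\mathcal{M}^W}(x,\lambda'|\lambda)\, f(\lambda').
\]
The key trick is to insert $f(\lambda')=f(\lambda)+\bigl(f(\lambda')-f(\lambda)\bigr)$, splitting $p_-=A+B$. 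This is exactly the device that sharpens the naive $p_d$ penalty of the simplified proof to $(1-p_F)p_d$, because the ``identity'' part of the channel, which lives on the diagonal $\lambda'=\lambda$, is annihilated by the factor $f(\lambda')-f(\lambda)$ in $B$.

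For the term $A$ I would marginalise over $\lambda'$ using the stated relation $\int d\lambda'\, p_{\mathcal{M}^W}(x,\lambda'|\lambda)=p_{M_W}(x|\lambda)$, obtaining $A=\int d\lambda\, p_{P_*}(\lambda) f(\lambda)\int_{-\infty}^0 p_{M_W}(x|\lambda)\,dx$, and then apply hypothesis~\ref{oe1th1} directly, giving $A\le \tilde p\, p_F$ with no slack. For the term $B$ I would first note that $\int_{-\infty}^0 dx\, p_{\mathcal{M}^W}(x,\lambda'|\lambda)\ge 0$, so replacing $f(\lambda')-f(\lambda)$ by its positive part $\max\!\bigl(f(\lambda')-f(\lambda),0\bigr)$ only increases $B$, and then bound the negative-$x$ integral by the full integral $\int_{-\infty}^{\infty} dx\, p_{\mathcal{M}^W}(x,\lambda'|\lambda)=p_{\mathcal{M}}(\lambda'|\lambda)$. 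Invoking hypothesis~\ref{oe2th1} together with transformation noncontextuality yields $p_{\mathcal{M}}(\lambda'|\lambda)=(1-p_d)\,\delta(\lambda'-\lambda)+p_d\,p_{\mathcal{M}^D}(\lambda'|\lambda)$, and since the positive part vanishes on $\lambda'=\lambda$ the delta drops out entirely. Using $\max\!\bigl(f(\lambda')-f(\lambda),0\bigr)\le 1-f(\lambda)$ (as $f(\lambda')\le 1$) and the normalisation $\int d\lambda'\, p_{\mathcal{M}^D}(\lambda'|\lambda)=1$ of the transformation then gives $B\le p_d\int d\lambda\, p_{P_*}(\lambda)\bigl(1-f(\lambda)\bigr)=(1-p_F)p_d$. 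Adding the two bounds produces $p_-\le \tilde p\, p_F+(1-p_F)p_d$, as claimed.

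The step I expect to be the main obstacle is the rigorous treatment of the identity channel in a continuous ontic state space, where $p_{\mathcal{I}}(\lambda'|\lambda)=\delta(\lambda'-\lambda)$ is distributional; the simplified proof sidesteps this by assuming a finite $\lambda$ and splitting into $\lambda'=\lambda$ and $\lambda'\neq\lambda$ sums, but that route both requires discreteness and only delivers the weaker $p_d$ bound. The decomposition $f(\lambda')=f(\lambda)+(f(\lambda')-f(\lambda))$ is precisely what resolves both difficulties at once: the delta is multiplied by a factor vanishing on its support, so no distributional manipulation is ever needed, and the residual weight $1-f(\lambda)$ integrates against $p_{P_*}$ to give the improved factor $1-p_F$. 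Everything else reduces to the elementary facts that $\mathcal{M}^D$ is a normalised transformation and that $f\in[0,1]$, so no genuinely hard estimate remains once this splitting is in place.
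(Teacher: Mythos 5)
Your proof is correct and follows essentially the same route as the paper's: the paper splits the $\lambda'$ integral into the regions where $p_{M_F}(y{=}1|\lambda')-p_{M_F}(y{=}1|\lambda)$ is nonpositive versus positive and tracks a correction term $c$, which is just a repackaging of your global decomposition $f(\lambda')=f(\lambda)+(f(\lambda')-f(\lambda))$ followed by taking the positive part. All the key estimates -- bounding the undisturbed piece by $\tilde p\,p_F$, killing the $\delta(\lambda'-\lambda)$ term via the vanishing factor, and using $\max(f(\lambda')-f(\lambda),0)\le 1-f(\lambda)$ with the normalisation of $p_{\mathcal{M}^D}$ -- coincide with the paper's.
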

\begin{proof}
		Define $\Lambda^\lambda_1 = \{ \lambda' : p_{M_F}(y=1|\lambda') \leq p_{M_F}(y=1|\lambda) \}$ (i.e. $\lambda'$ is undisturbed or uselessly disturbed, in terms of probability of passing the post-selection) and $\Lambda^\lambda_2 = \Lambda \setminus \Lambda^\lambda_1 = \{ \lambda' :p_{M_F}(y=1|\lambda') > p_{M_F}(y=1|\lambda)\}$ (i.e., $\lambda'$ usefully disturbed). In the ontological model,
		
		\begin{equation}
	p_-=  \int_{-\infty}^0 p(x,y=1|P_*, \mathcal{M}^W,M_F)dx = \int_{-\infty}^0 \int \int p_{M_F}(y=1|\lambda') p_{\mathcal{M}^W}(x,\lambda'|\lambda)p_{P_*}(\lambda) d\lambda' d\lambda dx . \label{contform}
		\end{equation}
As described in Sec.~\ref{sec:noncontextualdescription},	$p_{M_W}(x|\lambda) = \int_\Lambda p_{\mathcal{M}^W}(x,\lambda'|\lambda)d\lambda'$.
	Also, note that $\int p_{M_F}(y=1|\lambda) p_{P_*}(\lambda)d\lambda = p_F$. Hence, for the $\Lambda^\lambda_1$ part of \eqref{contform}, using Eq.~\eqref{eq:oe1th1},
	\begin{multline}
	\int_{-\infty}^0 dx \int_{\Lambda} d\lambda \int_{\Lambda^\lambda_1} d\lambda'  p_{M_F}(y=1|\lambda') p_{\mathcal{M}^W}(x, \lambda'|\lambda)p_{P_*}(\lambda)   \leq \int_{-\infty}^0 dx\int_{\Lambda}d\lambda \int_{\Lambda^\lambda_1}d\lambda'  p_{M_F}(y=1|\lambda) p_{\mathcal{M}^W}(x, \lambda'|\lambda)p_{P_*}(\lambda) \\ = \int_{-\infty}^0 dx\int_\Lambda d\lambda p_{M_F}(y=1|\lambda)p_{M_W}(x|\lambda)p_{P_*}(\lambda) - c\leq \tilde p \int_\Lambda p_{M_F}(y=1|\lambda)p_{P_*}(\lambda)d\lambda - c = \tilde p p_F - c,
	\end{multline}
	with a `correction' term  which measures the contribution to \eqref{contform} lost due to useless disturbance:
	\begin{equation}
	c= \int_{-\infty}^0 dx \int_{\Lambda} d\lambda \int_{\Lambda^\lambda_2} d\lambda' p_{M_F}(y=1|\lambda) p_{\mathcal{M}^W}(x, \lambda'|\lambda)p_{P_*}(\lambda).
	\end{equation}

	As described in Sec.~\ref{sec:noncontextualdescription}, $p_\mathcal{M}(\lambda'|\lambda) = \int_{-\infty}^\infty p_{\mathcal{M}^W}(x,\lambda'|\lambda)dx$. Hence, $p_\mathcal{M}(\lambda'|\lambda) \geq \int_{-\infty}^0 p_{\mathcal{M}^W}(x,\lambda'|\lambda)dx$. By Eq.~\eqref{eq:oe2th1} and (transformation) noncontextuality we have $p_\mathcal{M}(\lambda'|\lambda) = (1-p_d)p_\mathcal{I}(\lambda'| \lambda) + p_d p_{\mathcal{M}^D}(\lambda'|\lambda)$. Since $\mathcal{I}$ can be implemented, for example, by letting no time pass so that no dynamical evolution is possible, transformation noncontextuality requires $p_{\mathcal{I}}(\lambda'|\lambda) = \delta(\lambda'-\lambda)$. Hence, $
	\int d\lambda \int_{\Lambda^\lambda_2} d\lambda' \left(p_{M_F}(y=1|\lambda') - p_{M_F}(y=1|\lambda) \right) p_{\mathcal{I}}(\lambda'|\lambda) p(\lambda) = 0$. 
It follows that for the part of \eqref{contform} with $\lambda' \in \Lambda^\lambda_2$ we have
	\begin{align*}
	&\int_{-\infty}^0 dx \int_{\Lambda} d\lambda \int_{\Lambda^\lambda_2}d\lambda' p_{M_F}(y=1|\lambda') p_{\mathcal{M}^W}(x, \lambda'|\lambda)p_{P_*}(\lambda) \\
	&= \int_{-\infty}^0 dx \int_{\Lambda} d\lambda \int_{\Lambda^\lambda_2}d\lambda' \left(p_{M_F}(y=1|\lambda') - p_{M_F}(y=1|\lambda)\right)p_{\mathcal{M}^W}(x, \lambda'|\lambda) p_{P_*}(\lambda) + c \\
	&\leq\int_\Lambda d\lambda \int_{\Lambda^\lambda_2}d\lambda'\left(p_{M_F}(y=1|\lambda') - p_{M_F}(y=1|\lambda)\right)p_{\mathcal{M}}( \lambda'|\lambda)p_{P_*}(\lambda)+ c \\
	&= p_d  \int_\Lambda d\lambda \int_{\Lambda^\lambda_2}d\lambda'\left(p_{M_F}(y=1|\lambda') - p_{M_F}(y=1|\lambda)\right)p_{\mathcal{M}^D}( \lambda'|\lambda)p_{P_*}(\lambda)  +c \\ 
	&\leq  p_d  \int_\Lambda d\lambda \int_{\Lambda^\lambda_2}d\lambda'\left(1 - p_{M_F}(y=1|\lambda)\right)p_{\mathcal{M}^D}( \lambda'|\lambda)p_{P_*}(\lambda) + c \\
	 &\leq p_d\int_{\Lambda}\left(1 - p_{M_F}(y=1|\lambda)\right)p_{P_*}(\lambda)d\lambda + c \\
	&= (1-p_F)p_d + c.
	\end{align*}
	Summing the $\Lambda^\lambda_1$ and $\Lambda^\lambda_2$ contributions gives $p_- \leq p_F \tilde p + (1 - p_F)p_d$.
      \end{proof}

Our inequality is slightly tighter than the $p_- \leq p_F \tilde p + p_d$ one would expect from \cite{pusey14}. In order to check whether our inequality is in fact maximally tight, we applied the algorithmic approach to noncontextuality inequalities described in Ref.~\cite{schmid17}. Since that approach requires fixed operational equivalences, we repeated this procedure for many numerical values of the parameters $\tilde{p}, p_d$ and verified our inequalities define facets of the corresponding ``noncontextuality polytope'' \cite{schmid17} in each case (see Appendix~\ref{appendix:algorithmic}). It appears that our inequality is unique and tight, with the exclusion of the regime in which $p_d \geq \tilde{p}$, for which the method returns the trivial inequality $p_- \leq \tilde{p}$, which follows immediately from Eq.~\eqref{eq:oe1th1}. As we will see, in actual experiments one has $p_d \ll \tilde{p}$.

We can now prove the theorems by obtaining specific values for $\tilde p$ using noncontextuality and the operational equivalence of condition~\ref{condition:thm1} of each theorem:
\begin{proof}[Proof of Theorem~\ref{thm:1}]
  By Eq.~\eqref{eq:thm1} and measurement noncontextuality we have 
  \begin{equation}
    p_{M_W}(x|\lambda) = q(x-1)p_{M_\E}(y=1|\lambda) + q(x)p_{M_\E}(y=0|\lambda).
    \label{eq:thm1hv}
  \end{equation} Since the median of $q(x)$ is $0$ we have $\int_{-\infty}^0 q(x-1)dx \leq \int_{-\infty}^0 q(x) dx = \frac12$. In any ontological model, $\sum_y p_{M_\E}(y|\lambda) = 1$ for every $\lambda$. Integrating both sides of Eq.~\eqref{eq:thm1hv} from $-\infty$ to $0$ then gives Eq.~\eqref{eq:oe1th1} with $\tilde p = \frac12$. Hence,  we can apply Lemma~\ref{lem:transformationNC} to obtain the result.
\end{proof}

\begin{proof}[Proof of Theorem~\ref{thm:2}]
	By Eq.~\eqref{eq:thm2} and measurement noncontextuality we have 
	\begin{equation}
	p_{M_W}(\mmtm|\lambda) = p_{M_\text{triv}}(\mmtm|\lambda).
	\label{eq:thm2hv}
      \end{equation} By definition $p_{M_\text{triv}}(\mmtm|\lambda)$ is independent of $\lambda$ and $\int_{-\infty}^0 p_{M_\text{triv}}(\mmtm|\lambda)d\mmtm = \frac12$. Integrating both sides of Eq.~\eqref{eq:thm2hv} from $-\infty$ to $0$ then gives Eq.~\eqref{eq:oe1th1} with $\tilde p=\frac12$. Hence, we can apply Lemma~\ref{lem:transformationNC} to obtain the result.
\end{proof}

\begin{proof}[Proof of Theorem~\ref{thm:3}]
	By Eq.~\eqref{eq:thm3} and measurement noncontextuality we have 
	\begin{equation}
	p_{M_W}(x|\lambda) = p_m p_{M_\E}(x|\lambda) + (1-p_m) p_{M_\text{triv}}(x|\lambda).
	\label{eq:thm3hv}
      \end{equation} By definition $p_{M_\text{triv}}(x|\lambda)$ is independent of $\lambda$ and $\int_{-\infty}^0 p_{M_\text{triv}}(x|\lambda)dx = \frac12$. In any ontological model, $\int_{-\infty}^0 p_{M_\E}(x|\lambda) dx \leq \int_{-\infty}^\infty p_{M_\E}(x|\lambda) dx = 1$. Integrating both sides of \eqref{eq:thm3hv} from $-\infty$ to $0$ then gives Eq.~\eqref{eq:oe1th1} with $\tilde p = p_m + (1-p_m)\frac12 = \frac{1+p_m}2$. Applying Lemma~\ref{lem:transformationNC} gives the result.
\end{proof}

Notice that the tightness of the inequality proven in Lemma~\ref{lem:transformationNC} does not automatically imply that the inequalities in Theorems~\ref{thm:1}-\ref{thm:3} are tight, because Eqs.~\eqref{eq:thm1hv}-\eqref{eq:thm3hv} are stronger constraints than Eq.~\eqref{eq:oe1th1} with the relevant value of $\tilde p$. Since Eqs.~\eqref{eq:thm1hv} and \eqref{eq:thm2hv} reflect an infinite number of operational equivalences (one for each value of $x$), for Theorems~\ref{thm:1} and \ref{thm:2} this issue cannot be straightforwardly settled using the techniques from \cite{schmid17} alone because those only apply to finite sets of equivalences. It may be possible to gain some confidence by using a series of increasingly fine-grained but nevertheless finite operational equivalences. Theorem~\ref{thm:3} is a somewhat easier case: since it is intended to apply to a finite number of outcomes, for each number of outcomes there will in fact be a finite set of equivalences for which the relevant polytope could be calculated. In this work we leave the tightness of the inequalities in Theorems~\ref{thm:1}-\ref{thm:3} as open problems, but we find the tightness of the inequality in Lemma~\ref{lem:transformationNC} quite suggestive.

\section{Proof of Theorem~\ref{th:preparationNC} (+ extension to imaginary weak values and finite version) }\label{proof2}
We will use the same structure as in Appendix~\ref{proof1} above, with the main argument in the form a lemma.
  \begin{lem}[Noncontextuality inequality template 2]
  	\label{lem:preparationNC}
  	Suppose we have a noncontextual ontological model and:
  	\begin{enumerate}
	  \item  For any input $\lambda$, the probability of a negative outcome of $[x|M_W]$ is bounded by some value independent of the ontic state:
	 \begin{equation}
	 \label{eq:oe1pnc}
	 \int_{-\infty}^0 p_{M_W}(x|\lambda)dx \leq \tilde p.
		\end{equation}
  		\item Given the sequential measurement $[x,y|M_F \circ M_W]$, define $[y|\tilde{M}_F]:= \int dx [x,y|M_F \circ M_W]$. Then there exists $p_d \in [0,1]$ such that
  		\begin{equation}
  		\label{eq:oe2pnc}
  		[y|\tilde{M}_F] \simeq (1-p_d) [y|M_F] + p_d [y|M_D], 
  		\end{equation}
  		for some 2-outcome measurement $[y|M_D]$.
  		\item There exists an ensemble $$\{\{q_*,P_*\},\{q_\perp,P_\perp\}\},$$ such that
  		\begin{equation}
  		\label{eq:oepreparationpnc}
  		q_0[b=0|S] + q_1[b=1|S] \simeq  q_*P_* + q_\perp P_\perp.
  		\end{equation}
  	\end{enumerate}
   Then if $p_-:=  \int_{-\infty}^0 p(x,y=1|P_*, M_F \circ M_W)dx$ and $p_F := p(y=1|P_*,M_F)$,
   \begin{equation}
  \label{eq:inequalitypnc}
  p_- \leq p_F \tilde p + (1-p_F)p_d + \frac{1-C_{S}}{q_*}\max\{\tilde p-p_d,1-\tilde p\}.
  \end{equation}
  \end{lem}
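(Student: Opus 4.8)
The plan is to mirror the structure of Lemma~\ref{lem:transformationNC}, but to replace the transformation-noncontextuality input (unavailable here) by the combination of the effective-measurement equivalence \eqref{eq:oe2pnc} and the preparation equivalence \eqref{eq:oepreparationpnc}. First I would write everything in the ontological model. Setting $r(\lambda):=\int_{-\infty}^0 p_{M_F\circ M_W}(x,y=1|\lambda)\,dx$, one has $p_-=\int_\Lambda r(\lambda)\,p_{P_*}(\lambda)\,d\lambda$, so the whole problem reduces to bounding $r(\lambda)$ pointwise. Two bounds are available. Marginalising the sequential measurement over $y$ returns the weak measurement $M_W$, so $r(\lambda)\le\int_{-\infty}^0 p_{M_W}(x|\lambda)\,dx\le\tilde p$ by \eqref{eq:oe1pnc}; marginalising over all $x$ returns $\tilde M_F$, so $r(\lambda)\le p_{\tilde M_F}(y=1|\lambda)$, and measurement noncontextuality applied to \eqref{eq:oe2pnc} turns this into $r(\lambda)\le(1-p_d)\,t(\lambda)+p_d\,p_{M_D}(y=1|\lambda)$, where I abbreviate $t(\lambda):=p_{M_F}(y=1|\lambda)$.

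The next step is to interpolate between these two bounds so as to reproduce the first two terms of the target. Writing each bound as $\tilde p\,t(\lambda)+p_d(1-t(\lambda))$ plus a slack, the bound $r\le\tilde p$ contributes slack $(\tilde p-p_d)(1-t)$, while the bound $r\le(1-p_d)t+p_d\,p_{M_D}(y=1|\lambda)$ contributes slack $(1-\tilde p)t-p_d(1-p_{M_D}(y=1|\lambda))$, which is at most $(1-\tilde p)t$. Relaxing the latter up to $(1-\tilde p)t$ and taking the smaller of the two, I obtain $r(\lambda)\le \tilde p\,t(\lambda)+p_d(1-t(\lambda))+G(\lambda)$ with a manifestly nonnegative correction $G(\lambda)=\min\{(\tilde p-p_d)(1-t(\lambda)),(1-\tilde p)t(\lambda)\}\ge 0$ obeying both $G(\lambda)\le\max\{\tilde p-p_d,1-\tilde p\}\,t(\lambda)$ and $G(\lambda)\le\max\{\tilde p-p_d,1-\tilde p\}\,(1-t(\lambda))$ (the regime $p_d\ge\tilde p$ is degenerate and the claim follows there directly from $p_-\le\tilde p$). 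Integrating against $p_{P_*}$ and using $\int t\,p_{P_*}=p_F$ gives $p_-\le \tilde p\,p_F+(1-p_F)p_d+\int_\Lambda G(\lambda)\,p_{P_*}(\lambda)\,d\lambda$, isolating the entire effect of imperfect post-selection in the residual $\int G\,p_{P_*}$.

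Finally I would convert this residual into the measurable anti-correlation $1-C_{S}$ using the preparation equivalence. Preparation noncontextuality applied to \eqref{eq:oepreparationpnc} gives the pointwise identity $q_*\,p_{P_*}(\lambda)=q_0\,p_{S_0}(\lambda)+q_1\,p_{S_1}(\lambda)-q_\perp\,p_{P_\perp}(\lambda)$, where $p_{S_0},p_{S_1},p_{P_\perp}$ are the ontic distributions of $[b=0|S],[b=1|S],P_\perp$. Since $G\ge 0$ (and $p_{P_\perp}\ge 0$), dropping the $P_\perp$ term can only increase the right-hand side, so $q_*\int G\,p_{P_*}\le \int G\,(q_0 p_{S_0}+q_1 p_{S_1})\,d\lambda$. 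I would then bound the $S_1$ contribution using $G\le\max\{\tilde p-p_d,1-\tilde p\}(1-t)$ and the $S_0$ contribution using $G\le\max\{\tilde p-p_d,1-\tilde p\}\,t$; since in the ontological model $1-C_{S}=\int[q_1 p_{S_1}(1-t)+q_0 p_{S_0}t]\,d\lambda$, these two pieces assemble into exactly $\max\{\tilde p-p_d,1-\tilde p\}(1-C_{S})$, which after dividing by $q_*$ yields \eqref{eq:inequalitypnc}.

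The main obstacle is the sign control in the last step. The natural slack is a minimum of two terms, one of which, $(1-\tilde p)t-p_d(1-p_{M_D}(y=1|\lambda))$, can be negative, and one cannot discard the $q_\perp p_{P_\perp}$ contribution after invoking the preparation equivalence unless the integrand is nonnegative. The fix is to first relax the correction upward to the manifestly nonnegative $G$ and only then pass to the ensemble; this same relaxation is what pins down the coefficient $\max\{\tilde p-p_d,1-\tilde p\}$, by forcing a single factor to cover both halves of $1-C_{S}$. A secondary point to handle carefully is that it is \emph{measurement} noncontextuality, not transformation noncontextuality, that legitimises turning \eqref{eq:oe2pnc} into a pointwise statement about $p_{\tilde M_F}$; this is precisely what lets this route run under the weaker hypotheses of Theorem~\ref{th:preparationNC}.
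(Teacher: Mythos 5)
Your proof is correct and follows essentially the same route as the paper's: the same two pointwise bounds on $\int_{-\infty}^0 p_{M_F\circ M_W}(x,y=1|\lambda)\,dx$ (one from condition 1, one from measurement noncontextuality applied to Eq.~\eqref{eq:oe2pnc}) are combined into a nonnegative correction term, which is then converted into $1-C_{S}$ via preparation noncontextuality applied to Eq.~\eqref{eq:oepreparationpnc}. The only difference is organizational: you take a pointwise minimum of the two bounds around a common base and exploit the exact identity $1-C_{S}=q_0\int t\,p_{S_0}+q_1\int(1-t)\,p_{S_1}$, whereas the paper partitions $\Lambda$ according to which outcome of $M_F$ is more likely and works with $\zeta(\lambda)=\max_y p_{M_F}(y|\lambda)$ --- the mathematical content is identical.
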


  \begin{proof}
  Let us denote by $p_{S}(\lambda|b)$ the probability distribution associated to $[b|S]$. 

From the definition of an ontological model, $
C_{S} = \sum_{b,y \in \{0,1\}} \delta_{b y} \int_\Lambda d\lambda p_{M_F}(y|\lambda) q_b p_{S}(\lambda|b)$. 
From the definition of conditional probability, $q_b p_{S}(\lambda|b) = p_{S}(\lambda) p_{S}(b|\lambda)$.
 Then
\begin{align}
C_{S} &= \sum_{b,y \in \{0,1\}} \delta_{b y} \int_\Lambda d\lambda	 p_{M_F}(y|\lambda) p_{S}(b|\lambda)p_{S}(\lambda)  \leq  \int_\Lambda d\lambda	\max_{y \in \{0,1\}} p_{M_F}(y|\lambda) \sum_{b,y \in \{0,1\}} \delta_{b y}  p_{S}(b|\lambda) p_{S}(\lambda) := \int_\Lambda d\lambda \zeta(\lambda) p_{S}(\lambda) , \label{eq:correlationderivation}
\end{align}
where $\zeta(\lambda):= \max_{y \in \{0,1\}} p_{M_F}(y|\lambda)$. We now work out some inequalities that we need in order to bound $p_-$. Let us now split the set of ontological variables $\Lambda$ in the union of two disjoint sets: $\Lambda = \Lambda_0 \sqcup \Lambda_1$,
\begin{equation*}
\Lambda_0 = \{\lambda \in \Lambda| p_{M_F}(y=0|\lambda) \geq p_{M_F}(y=1|\lambda)\}, \ 
\Lambda_1 = \{\lambda \in \Lambda| p_{M_F}(y=1|\lambda) > p_{M_F}(y=0|\lambda)\}.
\end{equation*}
Note that $\Lambda_0$ ($\Lambda_1$) is the set of $\lambda$s that are more likely than not to fail (pass) the post-selection measurement.

\emph{Inequality 1}: For every $\lambda \in \Lambda$,
\begin{eqnarray}
\int_{-\infty}^{0} dx p_{M_F \circ M_W}(x, y=1|\lambda) \leq \int_{-\infty}^{0} dx p_{M_W}(x|\lambda)
 \leq \tilde p
\label{eq:inequality1derivation}
\end{eqnarray}
where we have used \eqref{eq:oe1pnc}.

\emph{Inequality 2}: For every $\lambda \in \Lambda_0$,
\begin{align}
\int_{-\infty}^{0} dx p_{M_F \circ M_W}(x, y=1|\lambda)  \leq \int_{-\infty}^{+\infty} dx p_{M_F \circ M_W}(x, y=1|\lambda) = (1-p_d) p_{M_F}(y=1|\lambda) + p_d p_{M_D}(y=1|\lambda) = \nonumber \\
(1-p_d) (1-\zeta(\lambda)) + p_d p_{M_D}(y=1|\lambda) \leq
(1-p_d) (1-\zeta(\lambda)) + p_d,
\label{eq:inequality2derivation}
\end{align}
where we used measurement noncontextuality applied to the operational equivalence of Eq.~\eqref{eq:oe2pnc} and the definition of $\zeta(\lambda)$ in $\Lambda_0$.

We can now use these inequalities to give an upper bound to $p_-$. We are going to use Eq.~\eqref{eq:inequality1derivation} for $\lambda \in \Lambda_1$ and Eq.~\eqref{eq:inequality2derivation} for $\lambda \in \Lambda_0$.
\begin{align}
	p_- = \int_{-\infty}^{0} dx p(x, y=1| P_*, M_F \circ M_W ) = 
	\sum_{i=0}^1\int_{-\infty}^{0} dx \int_{\Lambda_i} d\lambda p_{P_*}(\lambda)p_{M_F \circ M_W}(x,y=1|\lambda) \leq \nonumber \\
	(1-p_d)  \int_{\Lambda_0} d\lambda p_{P_*}(\lambda)(1- \zeta(\lambda)) + p_d  \int_{\Lambda_0} d\lambda p_{P_*}(\lambda) +
	\tilde p \int_{\Lambda_1} d\lambda p_{P_*}(\lambda).
	\label{eq:p-derivation}
	\end{align}
Let us analyse the various terms separately:
	\begin{align}
	\int_{\Lambda_0} d\lambda p_{P_*}(\lambda) & = \int_{\Lambda_0} d\lambda p_{P_*}(\lambda) p_{M_F}(y=0|\lambda) + \int_{\Lambda_1} d\lambda p_{P_*}(\lambda) p_{M_F}(y=0|\lambda) \\ & +
	\int_{\Lambda_0} d\lambda p_{P_*}(\lambda)(1-p_{M_F}(y=0|\lambda)) - \int_{\Lambda_1} d\lambda p_{P_*}(\lambda) p_{M_F}(y=0|\lambda) \nonumber \\
	& = 1-p_F + 	\int_{\Lambda_0} d\lambda p_{P_*}(\lambda)(1-\zeta(\lambda))) - \int_{\Lambda_1} d\lambda p_{P_*}(\lambda) (1-\zeta(\lambda)),
	\end{align}
where we used $\int_{\Lambda} d\lambda p_{P_*}(\lambda)p_{M_F}(y=0|\lambda) = 1-p_F$ and $p_{M_F}(y=0|\lambda) = \zeta(\lambda)$ for $\lambda \in \Lambda_0$ and $p_{M_F}(y=0|\lambda) = 1- \zeta(\lambda)$ for $\lambda \in \Lambda_1$. 
Similarly,
	\begin{align}
	\int_{\Lambda_1} d\lambda p_{P_*}(\lambda)
	=p_F - \int_{\Lambda_0} d\lambda p_{P_*}(\lambda)(1-\zeta(\lambda)) + \int_{\Lambda_1} d\lambda p_{P_*}(\lambda) (1-\zeta(\lambda))
	\label{eq:p-derivationintegral1}
	\end{align}
Substituting these in Eq.~\eqref{eq:p-derivation} we find
	\begin{eqnarray}
	p_- \leq \tilde p p_F + (1-p_F)p_d  
	+(1- \tilde p) \int_{\Lambda_0} d\lambda p_{P_*}(\lambda)(1-\zeta(\lambda)) 
	+(\tilde p-p_d) \int_{\Lambda_1} d\lambda p_{P_*}(\lambda)(1-\zeta(\lambda))\nonumber \\
	\leq \tilde p p_F + (1-p_F)p_d  
	+ \max\{\tilde p-p_d,1-\tilde p\} \int_{\Lambda} d\lambda p_{P_*}(\lambda)(1-\zeta(\lambda)).
	\end{eqnarray}

By preparation noncontextuality, Eq.~\eqref{eq:oepreparationpnc} implies $p_{S}(\lambda)=q_* p_{P_*}(\lambda) + q_\perp p_{P_\perp}(\lambda) \geq q_* p_{P_*}(\lambda)$.
Combining this with Eq.~\eqref{eq:correlationderivation}, we have
\begin{equation}
1-C_{S} = \int_\Lambda d\lambda p_{S}(\lambda) (1-\zeta(\lambda))
\geq q_* \int_\Lambda d\lambda  p_{P_*}(\lambda)(1-\zeta(\lambda))
\end{equation}

Substituting in the previous equation, we obtain the claimed bound.\end{proof}

Concerning tightness, we used the same approach as for Lemma~\ref{lem:transformationNC}, fixing the numerical values for $\tilde{p}$, $p_d$, $q_*$, $q_0$. For relevant choices of parameters we observe that the inequality defines a facet in the `non-contextuality polytope'. Furthermore, we provide numerical tools to derive all non-contextual inequalities for all choices of parameters, see Appendix~\ref{appendix:algorithmic}.

\begin{proof}[Proof of Theorem~\ref{th:preparationNC}]
  By Eq.~\eqref{eq:oe1th2} and measurement noncontextuality we have 
  \begin{equation}
    p_{M_W}(x|\lambda) = q(x-1)p_{M_\E}(y=1|\lambda) + q(x)p_{M_\E}(y=0|\lambda).
    \label{eq:oe1th2hv}
  \end{equation} Since the median of $q(x)$ is $0$ we have $\int_{-\infty}^0 q(x-1)dx \leq \int_{-\infty}^0 q(x) dx = \frac12$. In any ontological model, $\sum_y p_{M_\E}(y|\lambda) = 1$ for every $\lambda$. Integrating both sides of Eq.~\eqref{eq:oe1th2hv} from $-\infty$ to $0$ then gives Eq.~\eqref{eq:oe1pnc} with $\tilde p = \frac12$. Noting that $\tilde p = \frac12$ gives 
  \begin{equation}
    \max\{\tilde p-p_d,1-\tilde p\} = \max \left\{ \frac12 - p_d, \frac12\right\} = \frac12,
  \end{equation}
  we can apply Lemma~\ref{lem:preparationNC} to obtain the result.
\end{proof}

The extensions to imaginary weak values and to finite versions can be easily derived from Lemma~\ref{lem:preparationNC} following the same procedure as at the end of Sec.~\ref{proof1}. The situation regarding tightness of the inequalities also mirrors the discussion there.

\section{Noisy implementation of the weak value}\label{quantval}
\begin{lem}
	
	In quantum theory, a weak measurement of the projector $\E$ with initial spread of the pointer $s$ and imperfect postselection of $\Pi_\phi$ with $\epsilon-$unbiased noise as in Eq.~\eqref{eq:noisypostselection-main} achieves 
	\begin{equation}
	\label{eq:p-noisyappendix}
	p^{\rm noisy}_- = \frac{p_F}{2}  - \frac{1}{\sqrt{\pi}s}\re\left(\braket{[y=1|M_F] \E}_{\rho_*}   \right)   + o\left(\frac{1}{s}\right) .
	\end{equation}
	where $C_{S} = 1-\epsilon$. The operational equivalences required by Theorem~\ref{thm:1} are satisfied, and those of Theorem~\ref{th:preparationNC} can be satisfied by introducing the preparations ($d \equiv\Tr[\iden]$)
	\begin{equation}
	\sigma_0 = \frac{\iden-\Pi_\phi}{d-\Tr\Pi_\phi}, \quad \sigma_1 = \frac{\Pi_\phi}{\Tr\Pi_\phi}.
	\end{equation}
\end{lem}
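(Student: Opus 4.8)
The plan is to exploit the fact that $p_-$ is \emph{linear} in the postselection effect $[y=1|M_F]$. Writing the noisy effect as $[y=1|M_F]=(1-2\epsilon)\Pi_\phi+\epsilon\iden$, I would split
\[
p^{\rm noisy}_- = (1-2\epsilon)\int_{-\infty}^0 \! dx\,\Tr(\Pi_\phi N_x\rho_* N_x^\dagger) + \epsilon\int_{-\infty}^0 \! dx\,\Tr(N_x\rho_* N_x^\dagger).
\]
The first integral is exactly the ideal quantity of Eq.~\eqref{eq:p-ideal} evaluated with postselection $\Pi_\phi$. For the second I would use $\Tr(N_x\rho_*N_x^\dagger)=\Tr([x|M_W]\rho_*)$ and Eq.~\eqref{eq:M_Wopeq} to obtain $G^2_s(x-1)p_\E+G^2_s(x)(1-p_\E)$; the only genuinely new computation is then the elementary expansion $\int_{-\infty}^0 G^2_s(x-1)\,dx=\tfrac12-\tfrac1{\sqrt\pi s}+o(1/s)$ (while $\int_{-\infty}^0 G^2_s(x)\,dx=\tfrac12$ by symmetry and normalisation), which gives $\tfrac12-\tfrac{p_\E}{\sqrt\pi s}+o(1/s)$ for the second term. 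Recombining, the constant part assembles into $\tfrac{p_F}{2}$ using $p_F=(1-2\epsilon)\Tr(\Pi_\phi\rho_*)+\epsilon$, and the $1/s$ part assembles into $-\re(\braket{[y=1|M_F]\E}_{\rho_*})/(\sqrt\pi s)$ using the identity $\re(\braket{[y=1|M_F]\E}_{\rho_*})=(1-2\epsilon)\re(\braket{\Pi_\phi\E}_{\rho_*})+\epsilon p_\E$, reproducing Eq.~\eqref{eq:p-noisyappendix}.

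The key observation for the operational equivalences is that Eqs.~\eqref{eq:thm1} and~\eqref{eq:thm1_2} (equivalently Eqs.~\eqref{eq:M_Wopeq} and~\eqref{eq:M_opeq}) refer only to the weak-measurement instrument on the system and are entirely unaffected by noise added to the postselection; hence they hold verbatim as in the ideal case, with $q(x)=G^2_s(x)$ and $p_d=\tfrac{1-e^{-1/4s^2}}{2}$. This discharges the hypotheses of Theorem~\ref{thm:1} and the shared condition~\ref{oe1th2} of Theorem~\ref{th:preparationNC}. I would then compute $C_{S}$ directly: taking the fair ensemble $q_0=q_1=\tfrac12$ and recalling $\Tr\Pi_\phi=1$, so that $\sigma_1=\Pi_\phi$ and $\sigma_0=\frac{\iden-\Pi_\phi}{d-1}$, a short trace calculation using $\Pi_\phi^2=\Pi_\phi$ gives $\Tr([y=1|M_F]\sigma_1)=(1-2\epsilon)+\epsilon=1-\epsilon$ and, likewise, $\Tr([y=0|M_F]\sigma_0)=1-\epsilon$, so that $C_{S}=\tfrac12(1-\epsilon)+\tfrac12(1-\epsilon)=1-\epsilon$.

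It remains to check conditions~\ref{oe2th2} and~\ref{oepreparation} of Theorem~\ref{th:preparationNC}. For~\ref{oe2th2} I would note that $\mathcal{M}^D$ is conjugation by the self-inverse unitary $\E-\E^\perp=2\E-\iden$, so its Heisenberg adjoint (itself) sends the POVM $\{[y|M_F]\}$ to the valid POVM $\{(\E-\E^\perp)[y|M_F](\E-\E^\perp)\}=:\{[y|M_D]\}$; composing the transformation-level equivalence Eq.~\eqref{eq:M_opeq} with $M_F$ then yields the measurement-level equivalence Eq.~\eqref{eq:oe2th2}. The step I expect to be the main (if mild) obstacle is condition~\ref{oepreparation}: I must decompose the average preparation $\bar\sigma:=\tfrac12\sigma_0+\tfrac12\sigma_1=\tfrac12\Pi_\phi+\tfrac1{2(d-1)}(\iden-\Pi_\phi)$ as $q_*\rho_*+q_\perp\rho_\perp$. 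Since $\bar\sigma$ has strictly positive spectrum ($\tfrac12$ on $\ket{\phi}$ and $\tfrac1{2(d-1)}$ on its orthogonal complement) it lies in the interior of the state space, so for any $\rho_*$ there is a sufficiently small $q_*>0$ with $\bar\sigma-q_*\rho_*\geq 0$, whence $\rho_\perp:=(\bar\sigma-q_*\rho_*)/q_\perp$ is a legitimate state completing the ensemble. Pinning down the admissible range of $q_*$ — in particular whether the fair choice $q_*=\tfrac12$ used in the experimental discussion is always available — is the only delicate point, and it reduces to the positivity condition $q_*\rho_*\leq\bar\sigma$.
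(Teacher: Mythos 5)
Your proposal is correct, and the core of it (the linear split of $p^{\rm noisy}_-$ over the noisy effect, the Gaussian/erfc expansions, the recombination into $p_F/2$ and the noisy weak value, the observation that the Theorem~\ref{thm:1} equivalences are untouched by postselection noise, the value $C_{S}=1-\epsilon$, and the unitary-conjugation argument for Eq.~\eqref{eq:oe2th2}) matches the paper's proof in Appendix~\ref{quantval}. The one place you genuinely diverge is condition~\ref{oepreparation}: you fix the fair ensemble $q_0=q_1=\tfrac12$, obtain a full-rank $\bar\sigma$, and invoke an interiority/existence argument for a sufficiently small $q_*$, correctly flagging that the admissible range of $q_*$ is then $\rho_*$-dependent and delicate. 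The paper instead \emph{chooses} the ensemble weights to make the mixture maximally mixed: with $q_1=\Tr[\Pi_\phi]/d$ one gets $q_0\sigma_0+q_1\sigma_1=\iden/d$, and then $q_*=1/d$, $\rho_\perp=(\iden-\rho_*)/(d-1)$ satisfies Eq.~\eqref{eq:oepreparation} explicitly and uniformly for every $\rho_*$, which dissolves your ``delicate point'' at the cost of a smaller (hence, via the $1/q_*$ factor in Eq.~\eqref{eq:th2inequality}, less favourable) $q_*$; note that $C_{S}=1-\epsilon$ is independent of $q_0,q_1$, so nothing else in the lemma is affected. Your approach buys a potentially larger $q_*$ (relevant to the experimental discussion, where $q_*=\tfrac12$ is used), but to claim a specific value of $q_*$ you would indeed still have to verify $q_*\rho_*\leq\bar\sigma$ for the $\rho_*$ at hand; for the lemma as stated, which only asserts that the equivalences \emph{can} be satisfied, either construction suffices.
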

Note that the preparations $[b|S]$ were taken to have singular density operators, but this assumption does not imply an extra idealization. In fact, if we add unbiased noise to $S$, $\sigma_1 = (1-\delta) \frac{\Pi_\phi}{\Tr\Pi_\phi} + \delta \frac{\iden}{d}$ and similarly for $\sigma_0$, we could absorb $\delta$ by a redefinition of $\epsilon$. Also note that exactly the same proof shows that the operational equivalences required by Theorems~\ref{thm:2}-\ref{thm:3}, as well as for the imaginary weak values and finite versions of Theorem~\ref{th:preparationNC}, do hold. Finally, for the imaginary weak value version, $p_-$ has a similar expression as Eq.~\eqref{eq:p-noisyappendix}, but involving the imaginary part of the weak value.

\begin{proof}
	The weak measurement scheme with $\epsilon$-unbiased noise in the post-selection coincides with the standard scheme described in Sec.~\ref{stage} with the only difference that the postselection is taken to be
	\begin{equation*}
	\{[y=1|M_F],[y=0|M_F]\} = (1- 2\epsilon) \{\Pi_\phi,\iden- \Pi_\phi\} + 2\epsilon \{\iden/2, \iden/2\}.
	\end{equation*}
	Concerning the relation between $C_{S}$ and $\epsilon$:
	\begin{eqnarray}
	C_{S} &=& q_0p(y=0|b=0,S,M_F)+q_1p(y=1|b=1,S,M_F)\nonumber\\
	&=&q_0((1-2\epsilon)p(y=0|b=0,S,\{\Pi_\phi,\iden-\Pi_\phi\})+\epsilon)
	+q_1((1-2\epsilon)p(y=1|b=1,S,\{\Pi_\phi,\iden-\Pi_\phi\})+\epsilon)\nonumber\\
	&=&q_0(1-\epsilon)+q_1(1-\epsilon)=1-\epsilon.
	\end{eqnarray}
	\emph{Operational equivalences:} The operational equivalences of Theorem~\ref{thm:1} are satisfied by following the same argument as described in the main text for the ideal case, since none of them involve the postselection.
	
	Concerning the equivalences required for Theorem~\ref{th:preparationNC} and related imaginary/finite versions, the ones that do not follow immediately from previous arguments are Eq.~\eqref{eq:oe2th2} and Eq.~\eqref{eq:oepreparation}.
	
	To prove Eq.~\eqref{eq:oe2th2} we can start with the definition 
\begin{equation}
	[y|\tilde{M}_F]:= \int^{+\infty}_{-\infty} [x,y|M_F \circ M_W] = \int^{+\infty}_{-\infty} dx N^\dag_x [y|M_F] N_x = \mathcal{M}^\dag ([y|M_F]),
\end{equation}	
and, using Eq.~\eqref{eq:M_opeq}, obtain
	\begin{equation}
	[y=1|\tilde{M}_F] = \mathcal{M}^\dag([y=1|M_F]) = (1-2\epsilon) \mathcal{M}^\dag (\Pi_\phi) + \epsilon \iden = p_d [y=1|M_F] + (1-p_d)[(1-2\epsilon)\mathcal{M}^\dag_D(\Pi_\phi) + \epsilon \iden].
	\end{equation}
	By defining a POVM $\{M_D,\iden - M_D\}$ with $M_D = (1-2\epsilon)\mathcal{M}^\dag_D(\Pi_\phi) + \epsilon \iden$, we can see that Eq.~\eqref{eq:oe2th2} is satisfied with the same $p_d$ as in the ideal case, $p_d = (1- e^{-1/4s^2})/2$. 
	
	Moving on to Eq.~\eqref{eq:oepreparation}, to satisfy it we need a careful choice of $P_\perp$ with the aim of maximising $q_*$ and hence the violation. We will leave $q_*$ as a free parameter, but note that a choice satisfying Eq.~\eqref{eq:oepreparation} always exists for any choice of $P_*$ given by $\rho_*$:
	$$ q_* = 1/d, \; \; \rho_\perp = \frac{\iden - \rho_*}{d-1}, \; \; q_1=\frac{\Tr[\Pi_\phi]}{d}.$$
	In fact, with these choices, 
	\begin{equation}
	q_* \rho_* + q_\perp\rho_\perp = q_0 \sigma_0 + q_1 \sigma_1 = \iden/d.
	\end{equation}
	\emph{Expression for $p^{\rm noisy}_-$:} 
	for both the definition of $p_-$ of Theorem~\ref{thm:1} and that of Theorem~\ref{th:preparationNC}, using Eq.~\eqref{eq:N_x}:
	\begin{equation}
	p^{\rm noisy}_- = \epsilon \int_{-\infty}^0 dx\Tr(N^\dag_x N_x \rho_*) + (1-2\epsilon) \int_{-\infty}^0 \Tr(N^\dag_x \Pi_\phi N_x \rho_*)dx.
	\end{equation}
	
	For the first term, since $N^\dag_x N_x = G_s^2(x-1) \E+ G_s^2(x)\E^\perp$, using the integral $
	\int_{-\infty}^{0}dx G^2_s(x-1)= \frac{1}{2}\erfc\left(\frac{1}{s}\right)$ expressed using the complementary error function $\erfc(x)\equiv 1-\erf(x)\equiv 1-\frac{1}{\sqrt{\pi}}\int_{-x}^xe^{-t^2}dt$ and the expansion $ {\rm erfc}(1/s) = 1 - 2/(\sqrt{\pi}s)+ o(1/s)$,
	\begin{equation}
	\int_{-\infty}^0 dx\Tr(N^\dag_x N_x \rho_*) = \frac{1}{2} \erfc\left(\frac{1}{s}\right) p_\E + \frac{1}{2}(1-p_\E) = \frac{1}{2} - \frac{p_\E}{\sqrt{\pi}s} + o\left(\frac{1}{s}\right).  
	\end{equation}
	where $p_\E = \Tr(\E \rho_*)$.
	
	For the second term, from Eq.~\eqref{eq:N_x} and the integral $
	\int_{-\infty}^{0}G_s(x-1)G_s(x)dx=\frac{e^{-1/4s^2}}{2}\erfc\left(\frac{1}{2s}\right)$ we get
	\begin{align}
	\label{eq:perfectintegral}
	\int_{-\infty}^0 \Tr(N^{s \dag}_x \Pi_\phi N_x \rho_*) dx = &  \frac{1}{2} \erfc\left(\frac{1}{s}\right)    \Tr(\mathcal{E} \Pi_\phi \mathcal{E} \rho_*) 
	+ \frac{e^{-1/(4s^2)}}{2}  \erfc\left(\frac{1}{2s}\right) \Tr((\mathcal{E}^\perp \Pi_\phi \mathcal{E} + \mathcal{E} \Pi_\phi \mathcal{E}^\perp)\rho_*)  
	+ \frac{1}{2}\Tr(\mathcal{E}^\perp \Pi_\phi \mathcal{E}^\perp \rho_*) \nonumber
	\\ = & \frac{1}{2}\Tr(\Pi_\phi \rho_*) - \frac{1}{2 \sqrt{\pi} s} \Tr((\Pi_\phi \E + \E \Pi_\phi)\rho_*)  +   o\left(\frac{1}{s}\right),   
	\end{align}
	and we note that $\Tr\left((\Pi_\phi \E + \E \Pi_\phi)\rho_*\right) = 2\re\left( \braket{\Pi_\phi \E}_{\rho_*} \right)$.
	Substituting everything into the expression for $p^{\rm noisy}_-$,
	\begin{equation}
	p^{\rm noisy}_- = \frac{p_F}{2}  - \frac{1}{\sqrt{\pi}s}\re\left(\braket{[y=1|M_F]\E}_{\rho_*}\right)   + o\left(\frac{1}{s}\right) .
	\end{equation}
\end{proof}

\section{Measurements of pointer momentum}\label{imaginaryappendix}
%

Now we calculate $p_-$, the probability of a negative value of $p$ under the postselection. For simplicity we will only consider the ideal case, where $[y=1|M_F]$ is a projection $\Pi_\phi$. However, the noisy case can be derived extending the treatment below in the same way as we did with the position measurement of the pointer in Appendix~\ref{quantval}. Thus,
\begin{equation}
  p_- = \int_{-\infty}^{0} \Tr(N_\mmtm^\dagger \Pi_\phi N_\mmtm \rho_*) d\mmtm = \frac12\left(\Tr(\E \Pi_\phi \E \rho_*) + \Tr(\E^\perp \Pi_\phi \E^\perp \rho_*) + \alpha\Tr(\E^\perp \Pi_\phi \E \rho_*) + \alpha^*\Tr(\E \Pi_\phi \E^\perp \rho_*)\right)
\end{equation}
with integral (recalling Eq.~\eqref{eq:gaussianpointer})
\begin{equation}
  \alpha = 2\int_{-\infty}^{0}\abs{\braket{\mmtm|\Psi}_P}^2\exp(-i\mmtm)d\mmtm = \exp\left( -\frac1{4s^2} \right) \left(1 + \erf\left( \frac{i}{2s} \right)\right),
\end{equation}
To calculate $\alpha^*$ recall that the $\erf$ of a purely imaginary number is purely imaginary. Using $\alpha \approx 1 + \frac{i}{\sqrt{\pi}s}$ and $\Tr((\E + \E^\perp)\Pi_\phi(\E + \E^\perp)\rho_*) = \Tr(\Pi_\phi \rho_*) = p_F$ we find, at leading order in $1/s$,
\begin{equation}
  p_- \approx \frac{p_F}2 + \frac{1}{\sqrt{\pi}s}\re(i\Tr(\E^\perp \Pi_\phi \E \rho_*)).
\end{equation}
Since $\E^\perp = \iden - \E$ and $\im(\Tr(\E \Pi_\phi \E \rho_*)) = 0$ this gives at leading order in $1/s$,
\begin{equation}
  p_- \approx \frac12 - \frac{\im\left(\braket{\Pi_\phi\E}_{\rho_*}\right)}{\sqrt{\pi}s}.
\end{equation}

\section{Qubit pointers}\label{qubitappendix}
In Ref.~\cite{wu09} weak measurements using qubit pointers are constructed, with the weakness controlled by a parameter in the interaction between the system and pointer. It turns out that, as in the continuous pointer case, one can also use a fixed interaction and control the weakness using a parameter in the pointer state. For consistency we take that approach here.

The interaction we consider is $U = \E \otimes Z + \E^\perp \otimes \iden$ where $Z$ denotes the Pauli-$Z$ operator on the qubit pointer. This interaction is basically a controlled-phase gate where the control is $\E$ versus $\E^\perp$. Indeed, by preparing the pointer in $\ket{X=-1} = \frac{1}{\sqrt2}(\ket{0} - \ket{1})$ and measuring Pauli-$X$ on the pointer one can carry out a strong measurement of $\E$ with the usual disturbance. On the other hand, since $Z\ket{0} = \ket{0}$, preparing the pointer in $\ket{0}$ would mean $U$ acts as identity on the system and hence causes no disturbance. This suggests we can achieve a weak measurement by taking an initial pointer state of $\ket{\Psi_\epsilon} = \cos\epsilon\ket{0} - \sin\epsilon\ket{1}$, where $\epsilon$ is small. Measuring $X$ on the pointer gives Kraus operators (here $\ket{X=1} = \frac{1}{\sqrt2}(\ket{0} + \ket{1})$)
\begin{equation}
N_{\pm 1} = \braket{X = \pm 1|U|\Psi_\epsilon} = \frac1{\sqrt 2}\left(\E(\cos\epsilon \pm \sin\epsilon) + \E^\perp(\cos\epsilon \mp \sin\epsilon)\right) = \frac{1}{\sqrt 2}\left( \cos\epsilon \, \iden \pm \sin\epsilon (\E-\E^\perp) \right),
\end{equation}
and hence POVM elements
\begin{equation}
  N_{\pm 1}^\dagger N_{\pm 1} = \frac{\iden}{2}   \pm \cos\epsilon\sin\epsilon (\E - \E^\perp),
\end{equation}
so that
\begin{equation}
  N_{+1}^\dagger N_{+1} = (1-p_m)\frac\iden2 + p_m\E, \quad \quad   N_{-1}^\dagger N_{-1} = (1-p_m)\frac\iden2 + p_m\E^\perp,
\end{equation}
where $p_m = 2\cos\epsilon\sin\epsilon = \sin(2\epsilon)$. Hence, Eq.~\eqref{eq:thm3} is satisfied.

If we ignore the outcome of the measurement on the pointer then we apply a channel
\begin{equation}
  \mathcal{M}(\rho) = N_{+1} \rho N_{+1}^\dagger + N_{-1} \rho N_{-1}^\dagger = \cos^2 \epsilon \rho + \sin^2 \epsilon (\E - \E^\perp)\rho(\E - \E^\perp) = (1-p_d)\rho + p_d \mathcal{M}^D (\rho),
\end{equation}
where $p_d = \sin^2 \epsilon$ and $\mathcal{M}^D(\rho) = (\E - \E^\perp)\rho(\E - \E^\perp)$, Eq.~\eqref{eq:thm3_2} is satisfied.

Finally, considering a perfect post-selection onto a projector $\Pi_\phi$, we can calculate
\begin{multline}
  p_- = \Tr(N_{-1}^\dagger \Pi_\phi N_{-1} \rho_*) =\\ \frac12(\cos^2\epsilon\Tr(\Pi_\phi\rho_*) + \sin^2\epsilon\Tr((\E - \E^\perp)\Pi_\phi(\E - \E^\perp)\rho_*) - \sin\epsilon\cos\epsilon\Tr((\E - \E^\perp)\Pi_\phi\rho_* + \Pi_\phi(\E - \E^\perp)\rho_*).
\end{multline}
Expanding to first order in $\epsilon$ gives
\begin{equation}
  p_- \approx \frac{p_F}2 - \frac\epsilon2\Tr((\E - \E^\perp)\Pi_\phi\rho_* + \Pi_\phi(\E - \E^\perp)\rho_*) = \frac{p_F}2 - \epsilon\re(\Tr(\Pi_\phi(\E - \E^\perp)\rho_*)) = \frac{p_F}2 - \epsilon(2\re(\Tr(\Pi_\phi\E\rho_*)) - p_F),
\end{equation}
and since $p_m \approx 2\epsilon$ we obtain, at leading order in $\epsilon$,
\begin{equation}
  p_- \approx  p_F \frac{1+p_m}{2} - 2\epsilon\re\left(\braket{\Pi_\phi\E}_{\rho_*}\right).
\end{equation}

\section{Details of minimal-disturbance ontological model}\label{HVMappendix}
The weak measurement $\mathcal{M}^W$ disturbs the system so that the operational probabilities for the post-selection following it, $p(y|P_*,\mathcal{M},M_F)$ differ from those that would be obtained without the weak measurement, $p(y|P_*,M_F)$. Normally the post-selection becomes slightly more likely, i.e. $\epsilon := p(1|P_*,\mathcal{M},M_F) - p(1|P_*,M_F) > 0$, because the post-selection is chosen almost orthogonal to the preparation and the weak measurement makes the state of the system slightly mixed. We will construct a model under this assumption, but if the opposite is true then we simply need to exchange the roles of $y=0$ and $y=1$ in the rest of the discussion. By normalization $\epsilon = p(0|P_*,M_F) - p(0|P_*,\mathcal{M},M_F)$, and clearly $\epsilon \leq 1$ (indeed $\epsilon$ is just the total variation distance between $p(y|P_*,M_F)$ and $p(y|P_*,\mathcal{M},M_F)$). Hence we can define
\begin{equation}
D(y'|y) = \delta_{y'y} + \frac{\epsilon}{p(0|P_*,M_F)}S(y'|y).
\end{equation}
\begin{equation}
S(y'|y) = \begin{cases}-1 & y=0,y'=0 \\ 1 & y=0, y'=1 \\ 0 & y=1\end{cases}.
\end{equation}
This is a ``minimally disturbing'' \cite{villani03} conditional distribution such that
\begin{equation}
p(y'|P_*,\mathcal{M},M_F) = \sum_{y} D(y'|y)p(y|P_*,M_F).
\end{equation}
We use this disturbance in the representation of $\mathcal{M}$ in the ontological model:
\begin{equation}
p_{\mathcal{M}^W}(x,\lambda'|\lambda) = p(x|P_*,b=1,\mathcal{M},M_F,y=\lambda')D(y'=\lambda'|y=\lambda).
\end{equation}

By construction
\begin{equation}
p_{\mathcal{M}}(\lambda'|\lambda) = \int_{-\infty}^\infty p_{\mathcal{M}^W}(x,\lambda'|\lambda)dx = D(y'=\lambda'|y=\lambda),
\end{equation}
and we have that
\begin{equation}
D(y'|y) = (1-p_d)\delta_{y'y} + p_d \left(\delta_{y'y} + \frac{\epsilon}{p(0|P_*,M_F)p_d}S(y'|y)\right)
\end{equation}
which suggests that in order to satisfy condition~\ref{condition:thm1_2} of Theorems~\ref{thm:1}-\ref{thm:3} we should set
\begin{equation}
p_{\mathcal{M}^D}(\lambda'|\lambda) = \delta_{\lambda'\lambda} + \frac{\epsilon}{p(0|P_*,M_F)p_d}S(y'=\lambda'|y=\lambda).
\end{equation}
It is easy to see that this is normalized and is clearly positive except perhaps for
\begin{equation}
p_{\mathcal{M}^D}(\lambda'=0|\lambda=0) = 1 - \frac{\epsilon}{p(0|P_*,M_F)p_d},
\end{equation}
which is positive provided $p_d \geq \frac{\epsilon}{p(0|P_*,M_F)}$. To check this we note that the operational equivalence of condition~\ref{condition:thm1_2} on $\tilde{M}_F$ tells us that
\begin{equation}
p(1|P_*,\mathcal{M},M_F) = (1-p_d)p(1|P_*,M_F) + p_d p(y=1|P_*,\mathcal{M}^D,M_F)
\end{equation}
so that, since $p(y=1|P_*,\mathcal{M}^D,M_F) \leq 1$,
\begin{equation}
\frac{\epsilon}{p(0|P_*,M_F)} = \frac{p(1|P_*,\mathcal{M},M_F) - p(1|P_*,M_F)}{1-p(1|P_*,M_F)} = p_d\frac{p(y=1|P_*,\mathcal{M}^D,M_F) - p(1|P_*,M_F)}{1-p(1|P_*,M_F)} \leq p_d.
\end{equation}
as required.

\section{Algorithmic approach to tightness}
\label{appendix:algorithmic}

We discretize the problem and use the algorithmic approach of Ref.~\cite{schmid17}, to which we refer for extra details, in order to verify that the noncontextuality inequalities of Lemmas~\ref{lem:transformationNC} and \ref{lem:preparationNC} are indeed facet inequalities of the noncontextuality polytope describing the relevant statistics. We first set up the general algorithmic problem and then see how to apply to each theorem.
\subsection{Setting up the problem}\label{algsetup}
Since we will be dealing with arrays of procedures it will be useful to number them as follows: 
\begin{equation}
P_\perp \leftrightarrow P_1, \quad P_* \leftrightarrow P_2, \quad [b=0|S]  \leftrightarrow P_3, \quad [b=1|S] \leftrightarrow P_4
\end{equation}

The operational equivalence of Eq.~\eqref{eq:oepreparationpnc} can thus be written as 
\begin{equation}
  q_\perp P_1+(1-q_\perp) P_2\simeq q_0 P_3+(1-q_0) P_4.\label{eq:algoprep}
\end{equation}

Since the definition of $p_-$ and the relevant constraints only involve a coarse graining of the measurement outcome of $M_W$ (the weak measurement), we denote a binary-outcome coarse-graining of $M_W$ as
\begin{equation}
[X=-1|M_W^{\rm bin}] =
\int_{-\infty}^0 {\rm d}x [x|M_W], \quad [X=+1|M_W^{\rm bin}] = \int_0^{\infty}{\rm d}x [x|M_W].
\end{equation}

Henceforth, we will consider the sequential measurement $M_F\circ M_W^{\rm bin}$ rather than $M_F\circ M_W$. The operational equivalence of Eq.~\eqref{eq:oe2pnc} used in Lemma~\ref{lem:preparationNC} is
\begin{equation}
\label{eq:algMFoe}
	[y|\tilde M_F] = \sum_{X=\pm1} [X,y|M_F\circ M_W^{\rm bin}]\simeq (1-p_d)[y|M_F]+p_d[y|M_D].
\end{equation}
Finally Eq.~\eqref{eq:oe1pnc} (which appears in both lemmas) becomes the condition  
\begin{equation}
\label{eq:algoptilde}
  p_{M_W^{\rm bin}}(X=-1|\lambda)\leq \tilde p\in [0,1]\quad \forall \lambda\in\Lambda.
\end{equation}

Similarly we number the relevant measurements as $\{M_1, M_2, M_3\}$ and their outcomes by $m\in\{1,2,3,4\}$, defining events $[m|M_i]$ as 
\begin{eqnarray}
	M_1&:& [1|M_1] = [1|M_F], [2|M_1]= [0|M_F], \nonumber\\
	M_2&:& [1|M_2]= [1|M_D], [2|M_2]= [0|M_D], \nonumber\\
	M_3&:& [1|M_3]= [X=-1,y=1|M], [2|M_3]= [X=-1,y=0|M],\nonumber\\
	&& [3|M_3]=[X=+1,y=1|M], [4|M_3]=[X=+1,y=0|M].\nonumber\\
\end{eqnarray}
The operational equivalence of Eq.~\eqref{eq:algMFoe} can then be restated as
\begin{eqnarray}
	(1-p_d)[1|M_1]+p_d[1|M_2]&\simeq& [1|M_3]+[3|M_3], \nonumber \\
	(1-p_d)[2|M_1]+p_d[2|M_2]&\simeq& [2|M_3]+[4|M_3],
	\label{eq:algomeasurement}
\end{eqnarray}
whilst Eq.~\eqref{eq:algoptilde} becomes
\begin{equation}
  p_{M_3}(1|\lambda) + p_{M_3}(2|\lambda) \leq \tilde p\in [0,1]
  \label{eq:algoptilde2}
\end{equation}

Applying measurement non-contextuality to Eq.~\eqref{eq:algomeasurement} gives two linear constraints on the $P_{M_i}$, on top of which we have \eqref{eq:algoptilde2}, normalization, and positivity.

For any fixed $\lambda$, we can see an assignment of the $p_{M_i}(m|\lambda)$ as a $8$-component vector. The set of all assignments compatible with the above constraints defines a polytope in this space, which we denote as \textsf{weakvaluespolysymbN} in the accompanying code \cite{code}. Its vertices will be denoted by $\kappa$. The vertex assignments in the polytope are denoted by $p_{M_i}(m|\kappa)$. For every $\lambda$, we can decompose $p_{M_i}(m|\lambda)$ as 
\begin{equation}
p_{M_i}(m|\lambda) = \sum_{\kappa} w(\kappa|\lambda)p_{M_i}(m|\kappa),
\end{equation}
where $w(\kappa|\lambda) \geq 0$, $\sum_{\kappa} w(\kappa|\lambda) =1$. Hence, we can characterise all possible assignments by computing the vertex assignments. Doing the vertex enumeration with SageMath we find there are $16$ such vertices, $\kappa_1,\dots, \kappa_{16}$.

\subsection{Tightness of the inequality in Lemma~\ref{lem:transformationNC}}

Let us consider inequality in Lemma~\ref{lem:transformationNC}. Ref.~\cite{schmid17} does not consider transformation noncontextuality, and it is not obvious how to extend the approach there to transformation noncontextuality in general. But for checking tightness in our scenario it happens that we do not require such an extension. We will prove the following result, showing that a transformation and measurement non-contextual model for the weak value experiment exists if there exists a model satisfying the original assumptions of Ref.~\cite{pusey14} -- i.e. measurement non-contextuality and outcome determinism. 
\begin{lem}
  Suppose there exists a given model which satisfies Eq.~\eqref{eq:oe1th2}, is measurement noncontextual for the equivalence Eq.~\eqref{eq:oe2th2}, and represents $M_F$ for all $\lambda$ with $p_{M_F}(y|\lambda) \in \{0, 1\}$. Then there exists a derived model which satisfies Eq.~\eqref{eq:oe1th1}, is transformation noncontextual for the equivalence Eq.~\eqref{eq:oe2th1}, and makes the same operational predictions as the given model.
\end{lem}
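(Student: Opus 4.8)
The plan is to build the derived model on the \emph{same} ontic space $\Lambda$, reusing the given model's preparation densities $p_P(\lambda)$ and its deterministic representation $p_{M_F}(y|\lambda)\in\{0,1\}$ of the post-selection, and to construct an instrument $p_{\mathcal{M}^W}(x,\lambda'|\lambda)$ for the weak measurement whose marginals reproduce the given model's measurement statistics while being transformation-noncontextual. The starting observation is that outcome determinism lets me partition $\Lambda=\Lambda_0\sqcup\Lambda_1$ according to the deterministic value $f(\lambda)$ that $M_F$ reads off, so that $p_{M_F}(y|\lambda')=1$ precisely when $\lambda'\in\Lambda_y$ and $0$ otherwise. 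Writing $g(x,y|\lambda)$ for the given model's response function of the sequential measurement $M_F\circ M_W$ — whose marginals are the given $p_{M_W}(x|\lambda)$ and $p_{\tilde{M}_F}(y|\lambda)$ — the single operational constraint the derived instrument must meet is $\int_{\Lambda_y}p_{\mathcal{M}^W}(x,\lambda'|\lambda)\,d\lambda'=g(x,y|\lambda)$ for every $x,y$, since $M_F$ merely reads off the partition containing $\lambda'$.

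Next I would posit the minimal-disturbance ansatz $p_{\mathcal{M}^W}(x,\lambda'|\lambda)=(1-p_d)\,a(x|\lambda)\,\delta(\lambda'-\lambda)+p_d\,b(x,\lambda'|\lambda)$, which builds in transformation noncontextuality from the outset: the channel marginal $p_{\mathcal{M}}(\lambda'|\lambda)=\int dx\,p_{\mathcal{M}^W}(x,\lambda'|\lambda)$ automatically takes the form $(1-p_d)\delta(\lambda'-\lambda)+p_d\,p_{\mathcal{M}^D}(\lambda'|\lambda)$ demanded by \eqref{eq:oe2th1}, with $p_{\mathcal{I}}(\lambda'|\lambda)=\delta(\lambda'-\lambda)$ and $p_{\mathcal{M}^D}(\lambda'|\lambda):=\int dx\,b(x,\lambda'|\lambda)$. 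Setting $y_0:=f(\lambda)$, I would force $b$ to deposit mass $g(x,y_1|\lambda)/p_d$ on the flipped partition $\Lambda_{y_1}$ (with $y_1\neq y_0$) and a mass proportional to $g(x,y_0|\lambda)$ on $\Lambda_{y_0}$, distributing within each partition by any fixed reference density. This choice forces $a(x|\lambda)=g(x,y_0|\lambda)/p_{\tilde{M}_F}(y_0|\lambda)$ — the conditional distribution of $x$ given the deterministic post-selection outcome — and yields $\int_{\Lambda_y}p_{\mathcal{M}^D}(\lambda'|\lambda)\,d\lambda'=p_{M_D}(y|\lambda)$, so that the reconstructed channel reproduces $p_{\tilde{M}_F}$ through the measurement-noncontextual decomposition \eqref{eq:oe2th2}.

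Finally I would verify normalization, positivity, and agreement of all operational predictions. Normalization of $a$ and of $p_{\mathcal{M}^D}(\cdot|\lambda)$ follows from the identity $p_{\tilde{M}_F}(y_0|\lambda)=(1-p_d)+p_d\,p_{M_D}(y_0|\lambda)$, which is \eqref{eq:oe2th2} evaluated on the deterministic branch. Agreement is then immediate: $p_{M_F}$ and $p_P$ are untouched; the $M_W$-marginal collapses to $g(x,y_0|\lambda)+g(x,y_1|\lambda)=p_{M_W}(x|\lambda)$, so \eqref{eq:oe1th1} with $\tilde p=\tfrac12$ is inherited verbatim from the given model via \eqref{eq:oe1th2}; and the $M_F\circ M_W$- and $\tilde{M}_F$-statistics are reproduced by construction. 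I expect the main obstacle to be positivity of the identity weight, namely $a(x|\lambda)\ge0$: this reduces to $p_d\,p_{M_D}(y_0|\lambda)\le p_{\tilde{M}_F}(y_0|\lambda)$, which holds precisely because outcome determinism places the entire $(1-p_d)$ weight of \eqref{eq:oe2th2} on the branch $y=y_0$. It is exactly here that the hypothesis $p_{M_F}(y|\lambda)\in\{0,1\}$ is indispensable — without it the undisturbed ($\lambda'=\lambda$) part could not carry a genuine $\delta$-function while still matching $g$, and the $\mathcal{I}$-component of transformation noncontextuality would fail.
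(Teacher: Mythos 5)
Your construction is correct, but it is genuinely different from the paper's. The paper builds the derived model by \emph{coarse-graining the ontic space down to the two values of $y$}: it sets $\mathfrak{p}_{P_*}(\lambda{=}y)=\int p_{M_F}(y|\lambda)p_{P_*}(\lambda)d\lambda$ and defines the new instrument by conditioning the given model's sequential-measurement response function on the deterministic $M_F$-value, $\mathfrak{p}_{\mathcal{M}^W}(x,y'|y)\propto\int p_{M_F\circ M_W}(x,y'|\lambda)p_{M_F}(y|\lambda)p_{P_*}(\lambda)d\lambda$; outcome determinism enters through the identity $p_{M_F}(y'|\lambda)p_{M_F}(y|\lambda)=\delta_{y'y}p_{M_F}(y|\lambda)$, which produces the $(1-p_d)\delta_{y'y}$ term in the channel. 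You instead keep the original ontic space and preparations untouched and rebuild only the instrument, planting the $(1-p_d)\delta(\lambda'-\lambda)$ identity component by hand and letting $\mathcal{M}^D$ absorb the rest; determinism enters through the inequality $p_{\tilde M_F}(f(\lambda)|\lambda)=(1-p_d)+p_d\,p_{M_D}(f(\lambda)|\lambda)\geq 1-p_d$, which guarantees the delta mass fits under the branch $y_0=f(\lambda)$. What each buys: the paper's quotient is canonical and choice-free but bakes $P_*$ into the instrument, so the derived model reproduces predictions only for that preparation (which is all the tightness analysis needs); yours requires an arbitrary reference density within each cell $\Lambda_y$ but is preparation-independent, reproducing the sequential statistics for every preparation. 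Two small points to tidy: your positivity discussion is slightly misstated --- $a(x|\lambda)=g(x,y_0|\lambda)/p_{\tilde M_F}(y_0|\lambda)\geq 0$ is automatic, and the binding constraint is nonnegativity of the residual $g(x,y_0|\lambda)-(1-p_d)a(x|\lambda)$ handed to $\mathcal{M}^D$, which is exactly the inequality above; and the degenerate case $p_d=0$ should be noted (there measurement noncontextuality forces $g(x,y_1|\lambda)\equiv 0$, so no mass needs to be deposited on the flipped cell).
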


\begin{proof}
The derived model, probabilities of which we denote using $\mathfrak{p}$, will take the ontic state $\lambda$ to be determination of $y$ (as in Sec.~\ref{transformation_necessity}). In fact, it is constructed from the given model, which we denote by the usual $p$, by coarse-graining together all ontic states that assign the same outcome $y$ to $M_F$. In particular we set
\begin{equation}
  \mathfrak{p}_{P_*}(\lambda = y) := \int_\Lambda p_{M_F}(y|\lambda) p_{P_*}(\lambda)d\lambda,
\end{equation}
so that we have the same predictions for an immediate measurement of $M_F$. We also set
\begin{equation}
  \mathfrak{p}_{\mathcal{M}^W}(x,\lambda'=y'|\lambda=y) := \frac1{\mathfrak{p}_{P_*}(\lambda = y)}\int_\Lambda p_{M_F \circ M_W}(x,y'|\lambda) p_{M_F}(y|\lambda) p_{P_*}(\lambda)d\lambda.\label{newpw}
\end{equation}
This gives
\begin{equation}
  \sum_y \mathfrak{p}_{\mathcal{M}^W}(x,\lambda' = y'|\lambda=y)\mathfrak{p}_{P_*}(\lambda=y) = \int_\Lambda p_{M_F \circ M_W}(x,y'|\lambda) p_{P_*}(\lambda)d\lambda,
\end{equation}
so that we also have the same predictions for $\mathcal{M}^W$ followed by $M_F$. From Eq.~\eqref{newpw}, we can calculate
\begin{equation}
  \mathfrak{p}_{M_W}(x|\lambda = y) = \sum_{y'} \mathfrak{p}_{\mathcal{M}^W}(x,\lambda'=y'|\lambda=y) = \frac1{\mathfrak{p}_{P_*}(\lambda = y)}\int_\Lambda p_{M_W}(x|\lambda) p_{M_F}(y|\lambda) p_{P_*}(\lambda)d\lambda,
\end{equation}
and hence, since the given model satisfies Eq.~\eqref{eq:oe1th2},
\begin{multline}
  \int_{-\infty}^0 \mathfrak{p}_{M_W}(x|\lambda = y) dx = \frac1{\mathfrak{p}_{P_*}(\lambda = y)}\int_\Lambda \left(\int_{-\infty}^0 p_{M_W}(x|\lambda)dx\right) p_{M_F}(y|\lambda) p_{P_*}(\lambda)d\lambda \\ \leq \tilde{p} \frac1{\mathfrak{p}_{P_*}(\lambda = y)}\int_\Lambda  p_{M_F}(y|\lambda) p_{P_*}(\lambda)d\lambda = \tilde{p},
\end{multline}
giving Eq.~\eqref{eq:oe1th1} as claimed. Finally, we can calculate
\begin{equation}
  \mathfrak{p}_{\mathcal{M}}(\lambda' = y'|\lambda = y) = \int_{-\infty}^\infty \mathfrak{p}_{\mathcal{M}^W}(x,\lambda'=y'|\lambda=y) dx = \frac1{\mathfrak{p}_{P_*}(\lambda = y)}\int_\Lambda p_{\tilde{M_F}}(y'|\lambda) p_{M_F}(y|\lambda)  p_{P_*}(\lambda)d\lambda.
\end{equation}
Then since the given model is measurement noncontextual for Eq.~\eqref{eq:oe2th2} we find
\begin{multline}
  \mathfrak{p}_{\mathcal{M}}(\lambda' = y'|\lambda = y) = \frac1{\mathfrak{p}_{P_*}(\lambda = y)}\left((1-p_d)\int_\Lambda p_{M_F}(y'|\lambda) p_{M_F}(y|\lambda) p_{P_*}(\lambda)d\lambda + p_d\int_\Lambda p_{M_D}(y'|\lambda) p_{M_F}(y|\lambda) p_{P_*}(\lambda)d\lambda \right) \\= (1-p_d)\delta_{y'y} + p_d \mathfrak{p}_{\mathcal{M}^D}(\lambda'=y'|\lambda = y),
\end{multline}
where for the first term we have used outcome determinism to find $p_{M_F}(y'|\lambda) p_{M_F}(y|\lambda) = \delta_{y' y'} p_{M_F}(y|\lambda)$ and in the second we have defined
\begin{equation}
  \mathfrak{p}_{\mathcal{M}^D}(\lambda'=y'|\lambda = y) := \frac1{\mathfrak{p}_{P_*}(\lambda = y)}\int_\Lambda p_{M_D}(y'|\lambda) p_{M_F}(y|\lambda) p_{P_*}(\lambda)d\lambda.
\end{equation}
Hence we satisfy transformation noncontextuality for Eq.~\eqref{oe2th1}.
\end{proof}
We believe the converse also holds but we do not strictly require that here, since we have already proven that our inequality follows from transformation noncontextuality.

Thanks to this result we get the following algorithmic formulation for Lemma~\ref{lem:transformationNC}: consider the vertices $p_{M_i}(m|\kappa)$ from Sec.~\ref{algsetup} that satisfy the additional constraint $p_{M_1}(m|\kappa)\in\{0,1\}$. To determine a set of achievable $p(m|M_i,P_k)$ we consider outcome-deterministic measurement noncontextual models given as
\begin{equation}
  p(m|M_i,P_*) = \sum_{\kappa} p_*(\kappa) p_{M_i}(m|\kappa).
\end{equation}
where the sum is over the vertices $\kappa$ satisfying the determinism constraint.
Of the $16$ vertices determined before, we find $12$ satisfy it and store them in 12 by 8 matrix \textsf{mncdetverticeswvN}.

We now project the 12 vertices down to the subspace that corresponds to the operational quantities we want to relate via noncontextuality: $p_-, p_F$. This subspace corresponds to the coordinates $x_1$ and $x_5$: $x_1$ is for the effect $[1|M_F]$ (hence related to $p_F$), and $x_5$ for $[X=-1,y=1|M]$ (hence related to $p_-$). This is done by restricting the 12 vertices to the coordinates $(x_1,x_5)$ and constructing their convex hull to yield the reduced polytope. This reduced polytope, named \textsf{mncreduceddetpolyN}, constructed in this subspace, $\mathbb{R}^2$ has 4 vertices. By trying several values of $p_d$ and $\tilde{p}$ we find they are of the form $(0, 0)$, $(0, p'_d)$, $(1, 0)$, $(1, \tilde{p})$ where $p'_d = \min\{p_d,\tilde{p}\}$ which will equal $p_d$ for typical parameters. The H-representation of the polytope is given by: $x_1,x_5\geq 0$, $x_1\leq 1$, and $x_1\frac{\tilde{p}-p'_d}{p'_d\tilde{p}}-\frac{x_5}{p'_d\tilde p}+\frac{1}{\tilde p} \geq 0$. The last inequality gives an operational constraint of $p_- \leq p_F \tilde p + (1-p_F)p'_d = \min\{p_F \tilde p + (1-p_F)p_d, \tilde p\}$, as expected from Lemma~\ref{lem:transformationNC}. 

\subsection{Analysis of the inequality in Lemma~\ref{lem:preparationNC}}

For this case no new tricks are required and so we very closely follow \cite{schmid17}. If there is a measurement noncontextual model then the observed statistics $p(m|M_i,P_k)$ can be written as
\begin{equation}
  p(m|M_i,P_k) = \sum_{\kappa} p_k(\kappa) p_{M_i}(m|\kappa),
\end{equation}
where we now sum over all 16 vertices $\{\kappa\}$.
Preparation noncontextuality applied to Eq.~\eqref{eq:algoprep} gives
\begin{equation}
\label{eq:algopreparation}
	q_\perp p_1(\kappa)+(1-q_\perp)p_2(\kappa)=q_0 p_3(\kappa)+(1-q_0)p_4(\kappa), \quad \forall \kappa.
\end{equation}

We thus we arrive at the following formulation for Theorem~\ref{th:preparationNC}. In order for a non-contextual model to satisfy the assumptions of Theorem~\ref{th:preparationNC} and reproduce the statistics $p(m|M_i,P_j)$, the following constraints must be satisfied
\begin{align}
	&\forall \kappa, j: p_j(\kappa)\geq 0,\\
	&\forall j: \sum_{\kappa}p_j(\kappa)=1,\\
	&\forall \kappa: q_\perp p_1(\kappa)+(1-q_\perp)p_2(\kappa)-(q_0 p_3(\kappa)+(1-q_0)p_4(\kappa))=0,\label{eq:kappaequiv}\\
	&\forall i,j,m: \sum_{\kappa}p_{M_i}(m|\kappa)p_j(\kappa)=p(m|M_i,P_j).
\end{align}
The problem can be solved by eliminating the variables $p_i(\kappa), i\in\{1,2,3,4\}, \kappa\in\{\kappa_1,\kappa_2,\dots,\kappa_{16}\}$. 
Since we don't care about all of the $p(m|M_i,P_j)$, for computational efficiency we first take the 16 vertices of the polytope \textsf{weakvaluespolysymbN} and cull all the coordinates from them except $x_1,x_5$. This is because we want to look at constraints from noncontextuality on the quantities $(p_F, C_{S}, p_-)$ which are a function of these two coordinates alone.

Using the resulting set of 16 vertices projected in the $(x_1,x_5)$ subspace, denoted \textsf{mncx1x5verticesN}, as an input to SageMath's \textsf{Polyhedron()}, we obtain a $2$-dimensional $5$-vertex polytope in $\mathbb{R}^2$, denoted \textsf{mncreducedpolyN}. We keep the vertices $\kappa'$ of this polytope in a 5 by 2 matrix \textsf{redvtxN}.

The problem is now to eliminate $p_j(\kappa'), j\in\{1,2,3,4\}, \kappa'\in\{\kappa'_1,\kappa'_2,\dots,\kappa'_5\}$. Using Eq.~\eqref{eq:kappaequiv} we can manually eliminate $p_1(\kappa') = \frac{1}{q_\perp}(q_0p_3(\kappa')+(1-q_0)p_4(\kappa')-(1-q_\perp)p_2(\kappa'))$ and arrive at the following constraints:
	\begin{eqnarray}
		\forall \kappa',\forall j\in\{2,3,4\}&:& p_j(\kappa')\geq 0,\\
		\forall \kappa'&:& q_0p_3(\kappa')+(1-q_0)p_4(\kappa')-(1-q_\perp)p_2(\kappa')\geq 0,\\
		\forall j\in\{2,3,4\}&:& \sum_{\kappa'}p_j(\kappa')=1,\\
		&&\sum_{i=1}^5 p_2(\kappa'_i)\kappa'_i(0)=p(1|M_2,P_2)\equiv p_F,\\
		&&\sum_{i=1}^5 p_2(\kappa'_i)\kappa'_i(1)=p(1|M_4,P_2)\equiv p_-,\\
		&&q_0\left(\sum_{i=1}^5p_3(\kappa'_i)(1-\kappa'_i(0))\right)+(1-q_0)\left(\sum_{i=1}^5p_4(\kappa'_i)\kappa'_i(0)\right)
		=C_{S}.
	\end{eqnarray}
(Here $\kappa'_i(a)$ denotes the $a$th entry of vertex $\kappa'_i=(\kappa'_i(0),\kappa'_i(1))$, where $i\in\{1,2,3,4,5\}$, $a\in\{1,2,3\}$).

We now carry out the remaining eliminations as follows. We construct the polytope of vectors $((p_2(\kappa'_i))_{i=1}^5,(p_3(\kappa'_i))_{i=1}^5,(p_4(\kappa'_i))_{i=1}^5, p_F,C_{S},p_-)$ in $\mathbb{R}^{18}$ subject to the above constraints. This is a 12-dimensional polytope in $\mathbb{R}^{18}$ with 45 vertices, denoted \textsf{robustawvpolyN}.

We project the vertices down to just the coordinates $(p_F,C_{S},p_-)$ and construct a polytope with these as an input to \textsf{Polyhedron()}. This results in the polytope named \textsf{redawvpolyN}, a 3-dimensional polytope in $\mathbb{R}^3$ with 10 vertices.

For a representative case of $\left(q_*, q_\perp, p_d, \tilde{p}\right)=\left(\frac12,\frac12,\frac14,\frac12\right)$ the facets of this polytope include our noncontextuality inequality Eq.~\eqref{eq:inequalitypnc}: $p_F-4C_{S}-4p_-+5\geq 0$ or
\begin{align}
  p_- \leq \frac{p_F}{2}+\frac{1-p_F}{4}+\left(1-\frac{1}{2}\right)\frac{1-C_{S}}{1/2}
  =\tilde p p_F+p_d(1-p_F)+(1-\tilde p)\frac{1-C_{S}}{p_*}.
\end{align}

The overall tradeoff between $(p_F,C_{S},p_-)$ for this case is depicted in Figure \ref{tradeoffplot}. 
\begin{figure}
  \includegraphics[scale=0.7]{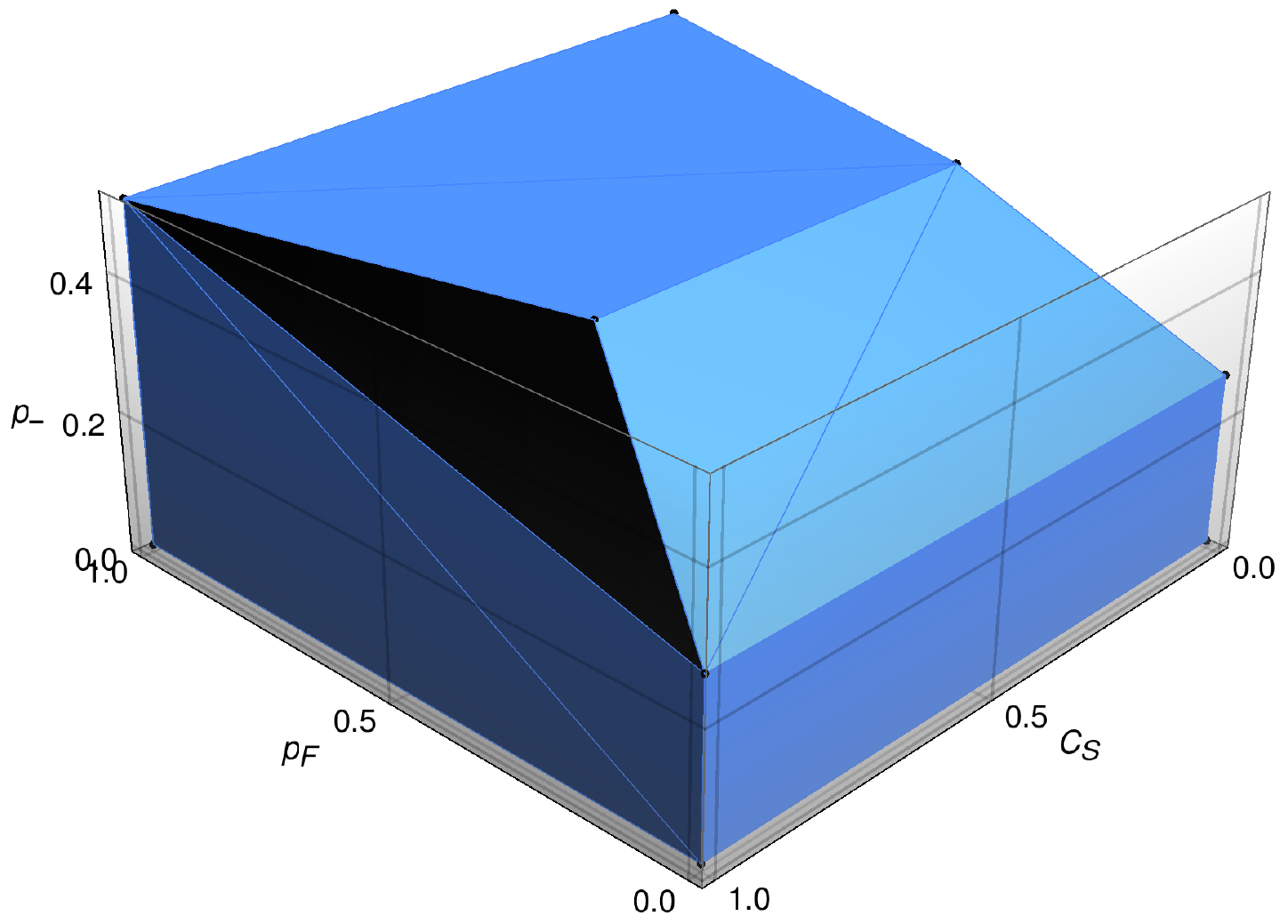}
  \caption{The noncontextuality tradeoff between $p_-,p_F$ and $C_{S}$ for $p_d=1/4,\tilde p=1/2, q_0=q_*=1/2$. The facet corresponding to \eqref{eq:th2inequality} is shown in black.}
	\label{tradeoffplot}
\end{figure}  

We also tried many other values of $\left(q_*, q_\perp, p_d, \tilde{p}\right)$. Eq.~\eqref{eq:inequalitypnc} always appeared as a facet, except when $p_d > \tilde p$.
\end{appendix}

\end{document}